% template for notes
\documentclass[pra,twocolumn,aps,preprintnumbers,showpacs,superscriptaddress,10pt]{revtex4-1}
\usepackage{hyperref}
\usepackage{verbatim}
\usepackage{amsmath}
\usepackage{latexsym}

\usepackage{natbib}
\usepackage{amsfonts}
\usepackage{amsmath}
\usepackage{amssymb}
\usepackage{amsthm}
\usepackage{graphicx}
\usepackage{bm}

\newcommand{\be}{\begin{eqnarray} \begin{aligned}}
\newcommand{\ee}{\end{aligned} \end{eqnarray} }
\newcommand{\benn}{\begin{eqnarray*} \begin{aligned}}
\newcommand{\eenn}{\end{aligned} \end{eqnarray*} }

\newcommand{\bc}{\begin{center}}
\newcommand{\ec}{\end{center}}
\newcommand{\half}{\frac{1}{2}}

\newcommand{\id}{\mathbb{I}}

\newcommand{\tr}{\mathop{\mathrm{tr}}\nolimits}

				% VECTORS
				% ABS

\newtheorem{theorem}{Theorem}[section]

\newtheorem{lemma}[theorem]{Lemma}

\newcommand{\ra}{\rightarrow}
\newcommand{\hil}{\mathcal{H}}

  %%% Space spanned by a set of vectors.

\newcommand{\nn}{\nonumber}

%%%%%%%%%%%%%% Greek Letters %%%%%%%%%%%%%%%%%%%%%%

\usepackage{amsfonts}
\def\Real{\mathbb{R}}
\def\Complex{\mathbb{C}}

\def\id{\mathbb{I}}

\def\01{\{0,1\}}

\newcommand{\eps}{\varepsilon}
\newcommand{\ket}[1]{|#1\rangle}
\newcommand{\bra}[1]{\langle#1|}

\newcommand{\proj}[1]{|#1\rangle\langle#1|}
\newcommand{\ketbra}[2]{|#1\rangle\langle#2|}

 % inproduct, < | >

\newcommand{\ignore}[1]{}

\newcommand{\vN}{{\ensuremath{{\rm H}}}}
\newcommand{\hmin}{{\ensuremath{{\rm H}}_{\min}}}
\newcommand{\hmineps}{{\ensuremath{{\rm H}}_{\min}^\eps}}
\newcommand{\hminarg}[1]{{\ensuremath{{\rm H}}_{\min}^{#1}}}
\newcommand{\hmax}{{\ensuremath{{\rm H}}_{\max}}}
\newcommand{\hmaxeps}{{\ensuremath{{\rm H}}_{\max}^\eps}}
\newcommand{\hmaxarg}[1]{{\ensuremath{{\rm H}}_{\max}^{#1}}}

\newcommand{\sys}{{\ensuremath{ S } }}
\newcommand{\env}{{\ensuremath{ E } }}
\newcommand{\ssys}{_{\sys}}
\newcommand{\senv}{_{\env}}

\newcommand{\hsys}{\mathcal{H}_{\ensuremath{ S } }}
\newcommand{\henv}{\mathcal{H}_{\ensuremath{ E } }}
\newcommand{\omg}{\mathcal{H}_{\Omega_E}}
\newcommand{\somg}{\mathcal{H}_{\Omega_S}}

\newcommand{\cS}{\ensuremath{\mathcal{S}}}
\newcommand{\cT}{\ensuremath{\mathcal{T}}}

\newcommand{\cI}{\ensuremath{\mathcal{I}}}
\newcommand{\mdag}{^{\dag}}
\newcommand{\lmax}{\lambda_{\max}}
\newcommand{\cj}{Choi-Jamio\l{}kowski }
\DeclareMathOperator{\Herm}{Herm}

\newcommand{\trdist}[2]{\left\|#1 - #2\right\|_1}

\newcommand{\UU}{\mathbb{U}}

\bibliographystyle{apsrev}

\begin{document}

\title{Dependence of a quantum mechanical system on its own initial state and the initial state of the environment it interacts with}
\author{Adrian Hutter}\email{adrian.hutter@unibas.ch}
\affiliation{Centre for Quantum Technologies, National University of Singapore, 2 Science Drive 3, 117543 Singapore}
\affiliation{Department of Physics, University of Basel, Klingelbergstrasse 82, CH-4056 Basel, Switzerland}
\author{Stephanie Wehner}
\affiliation{Centre for Quantum Technologies, National University of Singapore, 2 Science Drive 3, 117543 Singapore}
\email{wehner@nus.edu.sg}
\date{\today}

\begin{abstract}
	We present a unifying framework to the understanding of when and how quantum mechanical systems become independent 
	of their initial conditions and adapt macroscopic properties (like temperature) of the environment.
	By viewing this problem from an quantum information theory
	perspective, we are able to simplify it in a very natural and easy way. We first 
	show that for \emph{any} interaction between the system
	and the environment, and almost all initial states of the system, the question of how long the system retains memory of its initial conditions
	can be answered by studying the temporal evolution of just \emph{one} special initial state. This special state thereby depends only
	on our knowledge of macroscopic parameters of the system. We provide a simple entropic inequality for this state that can be used
	to determine whether mosts states of the system have, or have not become independent of their initial conditions after time $t$. 
	We discuss applications of our entropic criterion to thermalization times in systems with an effective light-cone and to quantum memories suffering depolarizing noise.
	We make a similar statement for almost all initial states of the environment, and finally provide a sufficient condition for which 
	a system never thermalizes, but remains close to its initial state for all times.
\end{abstract}
\maketitle

\section{Introduction}

We are all familiar with thermalization on a \emph{macroscopic} level - simply consider what happens when you leave your cup of coffee untouched for a while. Yet, understanding this process from a 
\emph{microscopic} level forms a challenging endeavour. How could we hope to justify thermalization from the rules of quantum mechanics?

To tackle this problem it is helpful to break it up into smaller, more manageable, components. As~\cite{linden:evolution} point out,
the straighforward-looking process of thermalization actually consists of four aspects which may be addressed independently. 
Roughly speaking, they deal with several different questions that we might ask about a system $\sys$ after it is placed into contact with an environment (bath) $\env$. 
The first of these is whether the system equilibrates
i.e.\ evolves towards some particular equilibrium state and remains close to it.
Note that when we only ask about equilibration, we do not care what form this equilibrium state actually takes. In particular, it may depend 
on the initial state of the system and/or the environment and does not need to be a thermal state. A second question is thus whether this equilibrium state \emph{is} indeed independent
of the intial state of the system. Note that one may also think of this question as asking whether the system retains at least some amount of memory of its precise initial conditions in equilibrium.
Similarly, the third question asks whether the equilibrium state depends on the precise details of the intial state of the environment, or only on its macroscopic parameters such as temperature.
Finally, if we find that the equilibrium state of the system is indeed independent of such initial states, we may then ask whether it actually takes on the familiar Boltzman form.

In this work, we focus on the second and third of these aspects, i.e.\ we are concerned with the following questions: 
\begin{itemize}
	\item \emph{Independence of the initial state of the system (environment).} At time $t$, does the state of the system depend on the precise initial state of the system (environment)? (or only on its macroscopic parameters?)
\end{itemize}
We present a unifying framework to deal with problems related to these two questions. 
Our approach allows to make statements about the the evolved state of $S$ at any particular time $t$ and hence about \emph{time-scales} needed for thermalization. This is in contrast to previous references dealing with thermalization in a highly general setting that make statements about \emph{temporal averages} 
\cite{linden:evolution,linden:fluctuations,gogolin:mthesis,gogolin:absence,gogolin:thermalization}.

Our approach is based on an entropic condition that allows to decide whether at time $t$ almost all initial states from some subspace $\Omega_S$ of $S$ have evolved close to the same state $\rho_S(t)$ or not. 
Let $\ket{\psi}_E$ denote the initial state of the environment $E$ the system interacts with and let $\pi_{\Omega_S}$ denote the maximally mixed state on $\Omega_S$. From time $0$ to $t$, $S$ and $E$ undergo a joint unitary evolution $U(t)$.
We define the state
\begin{align}
	\tau_{SE}(t) = U(t) \left( \pi_{\Omega_S} \otimes \proj{\psi}_E \right) U(t)^\dagger\ .
\end{align}
Note that the state $\tau_{SE}(0)$ has zero entropy in $E$, while its entropy in $S$ is determined by the subspace dimension of $\Omega_S$. 
We predict the following: as long as the (smooth min-)entropy of the state $\tau_{SE}(t)$ in $S$ is larger than the (smooth max-)entropy in $E$,
\begin{align}
  \hmineps(S)_{\tau_{SE}(t)} \gtrsim \hmaxeps(E)_{\tau_{SE}(t)}\ ,
\end{align}
different initial states from $\Omega_S$ have not (yet) evolved to the same state; the system still ``remembers'' its initial state.
Conversely, if at any time $t$ the (smooth min-)entropy of the state $\tau_{SE}(t)$ in $E$ has become larger than the (smooth max-)entropy in $S$, 
\begin{align}
  \hmaxeps(S)_{\tau_{SE}(t)} \lesssim \hmineps(E)_{\tau_{SE}(t)}\ ,
\end{align}
almost all initials states from $\Omega_S$ will have evolved close to the same state, namely $\tau_{S}(t)$.
This condition is essentially tight up to differences between smooth min- and max-entropies. 
These are introduced in Sec.~\ref{sec:entropies}, together with our main tool from quantum information theory. 
Our entropic condition for independence of the initial state of the environment will be formally derived and stated in Sec.~\ref{sec:systemState}, 
while an analogous condition for independence of the initial state of the environment shall be presented in Sec.~\ref{sec:environmentState}.
For both conditions, we present two example applications. We conclude in Sec.~\ref{sec:discussion}.

\section{Entropy measures and quantum channel analysis}\label{sec:entropies}

While the relevant entropy in i.i.d.\ (\emph{independent and identically distributed}) scenarios is the well-known von Neumann entropy, the relevant quantities for a single experiment (a.k.a.\ \emph{single shot}) 
are the min- and max-entropies, well established in quantum information theory. For a bipartite system $AB$ these are defined as
\begin{align}\label{hmin}
\hmin(A|B)_\rho := \sup_{\sigma_B}\sup\left\{\lambda\in\Real:2^{-\lambda}\id_A\otimes\sigma_B\geq\rho_{AB}\right\}
\end{align}
and
\begin{align}\label{hmax}
\hmax(A|B)_\rho := \sup_{\sigma_B}\log \left[F (\rho_{AB},\id_A\otimes\sigma_B)\right]^2\ ,
\end{align}
where the suprema are over all density operators on the Hilbert space $\hil_B$ and $F(\rho ,\sigma) := \left\|\sqrt{\rho} \sqrt{\sigma}\right\|_1$ with $\|A\|_1 = \tr\sqrt{A^\dagger A}$ denotes the fidelity.
With $\log$ we always denote the binary logarithm.
We have from \cite[Lemma 2 and Lemma 20]{marco:fqaep} that
\begin{align}\label{hminmaxineq}
&-\log \min\left\{d_A, d_B\right\}\leq \hmin(A|B)_\rho\leq \vN(A|B)_\rho \nn\\&\quad\qquad \leq\hmax(A|B)_\rho\leq\log d_A\ .
\end{align}
For a single system $A$, these entropy measures can easily be expressed in terms of the eigenvalues $\{\lambda_j\}_j$ of the state $\rho_A = \sum_j \lambda_j \proj{j}$ as 
$\hmin(A)_{\rho} = - \log \max_j \lambda_j$ and $\hmax(A)_{\rho} = 2\log \sum_j \sqrt{\lambda_j}$. Both quantities enjoy nice operational interpretations in quantum information~\cite{renato:operational}
as well as thermodynamics~\cite{lidia:thermodynamic,lidia:decouple}.
We will also refer to \emph{smoothed} versions of these quantities $\hmineps$ and $\hmaxeps$ which can be thought of as equal to the original quantity, except up to an error $\eps$. 
Specifically, $\hmineps(A|B)_\rho$ ($\hmaxeps(A|B)_\rho$) is the maximum (minimum) $\hmin(A|B)_\sigma$ ($\hmax(A|B)_\sigma$) over all states $\sigma_{AB}$ which are $\eps$-close to $\rho_{AB}$.
The appropriate distance measure is thereby the \emph{purified distance} \cite{tomamichel:duality,definition_comment}.
On the other hand, when we say that two quantum states $\rho$ and $\sigma$ are \emph{close}, we mean that their trace distance $\|\rho - \sigma\|_1$ is very small \cite{helstrom_comment}.

Both entropy measures converge to the von Neumann entropy ${\rm H}(A|B)_\rho$ in the asymptotic limit of many i.i.d.\ experiments~\cite{marco:fqaep}.
Since the smooth entropies are invariant under local isometries \cite{tomamichel:duality}, we have for a pure state $\ket{\psi}_{AB}$ that
\begin{align}\label{eq:pureSym}
  \ensuremath{{\rm H}}_{\min/\max}^\eps(A)_{\proj{\psi}} = \ensuremath{{\rm H}}_{\min/\max}^\eps(B)_{\proj{\psi}}\ ,
\end{align}
which we shall use repeatedly. In order to lower-bound $\hmineps(A|B)_\rho$, we employ 
\begin{align}\label{minchain1eqn}
\hmineps(A|B)_\rho \geq \hmineps (AB)_\rho - \log d_B\ .
\end{align}
(which follows directly from \cite[Lemma 3.1.10.]{renato:diss} and the definition of $\hmineps$)
and \cite{vitanov:chainrules}
\begin{align}\label{minchain2eqn}
\hmineps (A|B)_\rho \geq \hminarg{\frac{\eps}{4}}(AB)_\rho - \hmaxarg{\frac{\eps}{4}}(B)_\rho - O\left(\log\frac{1}{\eps}\right)\ .
\end{align} 

Consider a \emph{quantum channel} $\cT_{A\ra B}$, which mathematically is a completely positive and trace-preserving mapping (henceforth CPTPM). 
Let $\tau_{A'B}$ be the \emph{\cj} representation of $\cT_{A\ra B}$, i.e.\ the state
\begin{align}
	\tau_{A'B} = (\cI_A \otimes \cT_{A\rightarrow B})\proj{\Psi}_{A'A}\ ,
\end{align}
where $A'$ is a copy of $A$ and $\ket{\Psi}_{AA'} := \frac{1}{\sqrt{d_A}}\sum_{i=1}^{d_A}\ket{i}_A\otimes\ket{i}_{A'}$ denotes the maximally entangled state across $A$ and $A'$.

The following theorem provides essentially tight entropic conditions for whether a quantum channel $\cT_{A\ra B}$ is such that most input states on $\hil_A$ yield the same output on $B$ -- or not.
It will be our main tool to investigate the quantum channels we are interested in for physical reasons.
\begin{theorem}\label{thm:main}
 With the above notation, we have
\begin{align}\label{eq:smallAvg}
 &\left\langle\left\|\cT_{A\ra B}(\proj{\phi}_A)-\cT_{A\ra B} \left(\pi_A\right)\right\|_1\right\rangle_{\ket{\phi}_A} \nn\\&\quad \leq 2^{-\half \hmineps (A'|B)_\tau} + O\left(\eps\right)\ ,
\end{align}
where $\left\langle\ldots\right\rangle_{\ket{\phi}_A}$ denotes the average over the uniform (Haar) measure on $\hil_A$ and $\pi_A$ denotes the maximally mixed state on $A$.
Furthermore, for each $\delta>0$
\begin{align}\label{eq:expStrong}
&\Pr_{\ket{\phi}_A}\left\{\left\|\cT_{A\ra B}(\proj{\phi}_A)-\cT_{A\ra B} \left(\pi_A\right)\right\|_1 \right.\nn\\&\left.\qquad   \geq
 2^{-\half \hmineps (A'|B)_\tau} + O\left(\eps\right) + \delta\right\} \nn\\&
\leq 2 e^{-d_A\delta^2/16}\ ,
\end{align}
where $\Pr_{\ket{\phi}_A}$ denotes the probability if $\ket{\phi}_A$ is picked at random from the Haar measure on $\hil_A$.
Conversely, if
\begin{align}\label{eq:entropyCondition}
 \hmaxeps (A'B)_\tau + \log\frac{1}{1-(\sqrt{2\delta}+4\eps)^2}+\log\frac{2}{\eps^2} < \hmineps (B)_\tau
\end{align}
there is \emph{no} state $\omega_B$ such that 
\begin{align}\label{eq:converse}
\left\langle \left\| \cT(\proj{\phi}_A) - \omega_B\right\|_1 \right\rangle_{\ket{\phi}_A} \leq \frac{\delta}{2}\ .
\end{align}
\end{theorem}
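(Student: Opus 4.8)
My plan is to reduce the whole statement to a \emph{decoupling} question about the \cj state. Writing $\ket{\phi^*}$ for the complex conjugate of $\ket{\phi}$ in the computational basis, one checks from the definition of $\ket{\Psi}_{A'A}$ that $\bra{\phi^*}_{A'}\ket{\Psi}_{A'A}=\tfrac{1}{\sqrt{d_A}}\ket{\phi}_A$, whence
\begin{align}
\cT_{A\ra B}(\proj{\phi}_A)=d_A\,\bra{\phi^*}_{A'}\tau_{A'B}\ket{\phi^*}_{A'},\quad \cT_{A\ra B}(\pi_A)=\tau_B .
\end{align}
Thus $\cT(\proj{\phi})-\cT(\pi)$ is exactly the deviation, from its Haar mean $\tau_B$, of the state left on $B$ when $A'$ of $\tau_{A'B}$ is probed with the rank-one operator $\proj{\phi^*}$; the theorem just asks how well a random rank-one probe of $A'$ decouples $B$.

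For the average bound \eqref{eq:smallAvg} I would run a second-moment argument. First I smooth: replace $\tau_{A'B}$ by the state $\tilde\tau_{A'B}$ within purified distance $\eps$ attaining $\hmin(A'|B)_{\tilde\tau}=\hmineps(A'|B)_\tau$, charging the replacement to the $\bigoh(\eps)$ term. By Jensen it suffices to bound the averaged square. Using the weighting inequality $\|X_B\|_1\le\|\sigma_B^{-1/4}X_B\sigma_B^{-1/4}\|_2$ (valid for any density operator $\sigma_B$), the Haar moment $\langle\proj{\phi^*}^{\otimes 2}\rangle=(\id+\mathbb{F}_{A'})/[d_A(d_A+1)]$ with $\mathbb{F}$ the swap, and the identity $\tr[\mathbb{F}(X\otimes Y)]=\tr[XY]$, the integral collapses to
\begin{align}
\langle\|\cT(\proj{\phi})-\tau_B\|_1^2\rangle\le \tr\!\left[\big((\id_{A'}\otimes\sigma_B^{-1/2})\,\tilde\tau_{A'B}\big)^2\right],
\end{align}
a leftover $-\tfrac{1}{d_A+1}$-term only helping. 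Minimizing the right side over $\sigma_B$ yields the conditional collision entropy, $2^{-H_2(A'|B)_{\tilde\tau}}\le 2^{-\hmin(A'|B)_{\tilde\tau}}$ (since $H_2\ge\hmin$), and a square root gives \eqref{eq:smallAvg}. The tail bound \eqref{eq:expStrong} is then measure concentration: $\ket{\phi}\mapsto\|\cT(\proj{\phi})-\cT(\pi)\|_1$ is $\bigoh(1)$-Lipschitz on the unit sphere of $\hil_A$, because $\ket{\phi}\mapsto\proj{\phi}$ is $2$-Lipschitz in trace norm and $\cT$ does not increase trace distance, so Levy's lemma on $S^{2d_A-1}$ gives Gaussian deviations $2e^{-d_A\delta^2/16}$ about the mean, which \eqref{eq:smallAvg} controls.

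For the converse I argue by contradiction, and I would first eliminate the quantifier over $\omega_B$. Since $\tau_B=\cT(\pi_A)=\langle\cT(\proj{\phi})\rangle$, convexity gives $\|\omega_B-\tau_B\|_1\le\langle\|\cT(\proj{\phi})-\omega_B\|_1\rangle$, so any $\omega_B$ obeying \eqref{eq:converse} would force $\langle\|\cT(\proj{\phi})-\tau_B\|_1\rangle\le\delta$. It thus suffices to show that \eqref{eq:entropyCondition} forces $\langle\|\cT(\proj{\phi})-\tau_B\|_1\rangle>\delta$ -- the \emph{reverse} of \eqref{eq:smallAvg}. This is precisely the regime where $\tau_{A'B}$ is nearly pure ($\hmaxeps(A'B)_\tau$ small) while $\tau_B$ is nearly maximally mixed ($\hmineps(B)_\tau$ large), i.e.\ $A'$ and $B$ are strongly correlated, so the probed states $\cT(\proj{\phi})$ must fan out away from $\tau_B$. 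I would make this quantitative with the \emph{same} symmetric-subspace Haar integral as above, now used to lower-bound the mean deviation by the computable second moment and to convert that back into a first-moment trace-distance bound, smoothing $\tau_{A'B}$ and $\tau_B$ to the optimizers of $\hmaxeps(A'B)_\tau$ and $\hmineps(B)_\tau$; this is what I expect produces the corrections $\log\frac{1}{1-(\sqrt{2\delta}+4\eps)^2}$ (a Fuchs--van de Graaf / purified-distance term) and $\log\frac{2}{\eps^2}$.

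The hard part is exactly this reverse direction. A variance computation does not by itself lower-bound a mean trace distance -- equivalently, a random product-state probe of $A'$ cannot \emph{certify} the joint correlation that an entangled measurement across $A'B$ would reveal -- so one needs either a Paley--Zygmund inequality with explicit constants or a carefully chosen witness observable. Moreover the smoothing must be threaded through in the correct order (smoothing raises $\hmin$ but lowers $\hmax$, so the two entropic quantities in \eqref{eq:entropyCondition} have to be oriented consistently). Obtaining this lower bound with precisely the stated constants, rather than a merely qualitative ``the outputs spread out,'' is the crux of the converse; by contrast the achievability direction \eqref{eq:smallAvg}--\eqref{eq:expStrong} is the routine decoupling-plus-concentration part.
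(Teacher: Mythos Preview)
Your treatment of the direct part \eqref{eq:smallAvg}--\eqref{eq:expStrong} is fine and matches the paper: the paper simply invokes the decoupling theorem of \cite{dupuis:diss,dupuis:decoupling} with trivial reference, which is exactly the second-moment / swap-trick computation you sketch, and then Levy's lemma for the tail. Your reduction at the start of the converse --- using convexity of the trace norm to replace an arbitrary $\omega_B$ by $\tau_B=\cT(\pi_A)$ at the cost of a factor of two --- is also what the paper does (in the order: first prove the bound for $\tau_B$, then triangle-inequality plus convexity to rule out any $\omega_B$).

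The genuine gap is in the core of the converse, and you have correctly flagged it yourself: a second-moment or Paley--Zygmund argument does not give a clean lower bound on $\langle\|\cT(\proj{\phi})-\tau_B\|_1\rangle$, and there is no obvious reason it would produce the specific constants $\log\frac{1}{1-(\sqrt{2\delta}+4\eps)^2}$ and $\log\frac{2}{\eps^2}$. The paper takes a completely different route. It invokes the \emph{converse} to the decoupling theorem, \cite[Theorem~4.1]{dupuis:decoupling}, which says: if
\[
\hminarg{4\eps+\sqrt{2\delta}}(A|R)_\rho + \hmaxeps(A'B)_{\tilde\tau} - \hmineps(B)_{\tilde\tau} < -\log\tfrac{2}{\eps^2},
\]
then $\|\cT(\rho_{AR})-\cT(\rho_A)\otimes\rho_R\|_1>\delta$. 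The trick is the choice of $\rho_{AR}$: take $R$ to be a \emph{classical} register indexing the Haar unitary, $\rho_{AR}=\int_{\UU(A)} U\proj{\phi}U^\dagger\otimes\proj{U}_R\,dU$. Then $\rho_A=\pi_A$, so $\tilde\tau=\tau$, and the trace distance $\|\cT(\rho_{AR})-\cT(\rho_A)\otimes\rho_R\|_1$ unwinds (because $R$ is classical with orthogonal flags) to exactly the Haar average $\langle\|\cT(\proj{\phi})-\cT(\pi_A)\|_1\rangle$. The remaining ingredient is to compute $\hminarg{4\eps+\sqrt{2\delta}}(A|R)_\rho$ for this state; the paper proves as a separate lemma (Lemma~\ref{lem:hminConditional}) that for any cq-state $\frac{1}{n}\sum_i\proj{\psi(i)}_A\otimes\proj{i}_R$ one has $\hmineps(A|R)=\log\frac{1}{1-\eps^2}$, \emph{independent of $n$}, so one can pass to the continuum. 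Plugging this value in is what produces the term $\log\frac{1}{1-(\sqrt{2\delta}+4\eps)^2}$ in \eqref{eq:entropyCondition}. Without this classical-register device and the accompanying entropy computation, your proposed variance route has no evident way to reach the stated inequality.
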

The first assertion, Eq.~\eqref{eq:smallAvg}, is a direct consequence of the decoupling theorem of \cite{dupuis:diss,dupuis:decoupling} with a trivial reference system $R$ (note that $\tau_B = \tr_{A'}\cT_{A\rightarrow B}\proj{\Psi}_{A'A} = \cT_{A\ra B} \left(\pi_A\right)$).
Applying the measure concentration properties of the Haar measure (see the proof of \cite[Theorem 3.9.]{dupuis:diss} and references therein) then implies \eqref{eq:expStrong}.
In order to make \eqref{eq:expStrong} strong, we need $d_A$ to be sufficiently large. We can then choose $\delta=d_A^{-1/3}$ in order to make both $\delta$ as well as the probability $2 e^{-d_A\delta^2/16}$ small.
The reason for this is that the measure concentration properties of the Haar measure only give strong results in high-dimensional spaces.
Note, however, that Hilbert space dimensions grow exponentially with the number of constituent particles, so in usual situations of physical interest Hilbert space dimensions will be huge.

If $\hmineps (A'|B)_\tau$ is sufficiently positive and $d_A$ is sufficiently large, almost almost all input states on $A$ yield the same channel output on $B$.
From \eqref{minchain2eqn} we have $\hmineps (A'|B)_\tau \gtrsim \hmineps_\tau(A'B) - \hmaxeps(B)_\tau$. We will thus state the condition for almost all input states on $A$ to yield the same output on $B$ slightly informally as
\begin{align}\label{eq:cond1}
 \hmineps(A'B)_\tau \gtrsim \hmaxeps(B)_\tau\ .
\end{align}
Conversely, \eqref{eq:entropyCondition} tells us that if 
\begin{align}\label{eq:cond2}
 \hmaxeps(A'B)_\tau \lesssim \hmineps(B)_\tau\ ,
\end{align}
there is a considerable chance that different input states on $A$ yield different outputs on $B$.
Note that unlike the entropic terms, the logarithmic terms in \eqref{minchain2eqn} and \eqref{eq:entropyCondition}, which we neglected in our informal conditions, do not grow with system size and are thus negligible for large systems.
Our entropic conditions are thus essentially tight up to differences between smooth min- and max-entropies. This difference vanishes in the asymptotic limit of many i.i.d.\ experiments~\cite{marco:fqaep}.
Consider a product input space $A^{\otimes n}$ and an i.i.d.\  channel $\cT_{A\rightarrow B}^{\otimes n}$.
The \cj representation of the product channel $\cT_{A\rightarrow B}^{\otimes n}$ takes the form $\tau_{A'B}^{\otimes n}$, where $\tau_{A'B}$ is the \cj representation of each single channel $\cT_{A\rightarrow B}$.
In order to decide whether most input staes from $A^{\otimes n}$ are mapped to the same state on $B^{\otimes n}$, the relevant entropic quantities are then according to Theorem \ref{thm:main} 
$\hmineps (A'|B)_{\tau_{A'B}^{\otimes n}}$ and $\hmaxeps (A'B)_{\tau_{A'B}^{\otimes n}}-\hmineps (B)_{\tau_{A'B}^{\otimes n}}$. We have from \cite{marco:fqaep} that 
\begin{align}\label{eq:fqaep}
&\lim_{\eps\rightarrow0}\lim_{n\rightarrow\infty}\frac{1}{n}\hmineps (A'|B)_{\tau_{A'B}^{\otimes n}} \nn\\
&=\lim_{\eps\rightarrow0}\lim_{n\rightarrow\infty}\frac{1}{n}\left(\hmaxeps (A'B)_{\tau_{A'B}^{\otimes n}}-\hmineps (B)_{\tau_{A'B}^{\otimes n}}\right) \nn\\
&=\vN(A'|B)_{\tau_{A'B}}\ ,
\end{align}
where $\vN(A'|B)_{\tau_{A'B}}$ denotes the conditional von Neumann entropy evaluated for the state $\tau_{A'B}$.
The sign of $\vN(A'|B)_{\tau_{A'B}}$ thus indeed provides a tight criterion in the limit of large $n$.
Note that for the parameter $\delta$ in \eqref{eq:entropyCondition} and \eqref{eq:converse} we may choose any value for which the logarithmic term on the l.h.s.\ of \eqref{eq:entropyCondition} is still well-defined.
In the asymptotic limit $n\rightarrow\infty$ where we choose $\eps\rightarrow0$ for \eqref{eq:fqaep}, \eqref{eq:converse} is thus valid with $\delta=\frac{1}{2}$, if $\vN(A'|B)_{\tau_{A'B}}<0$.

In order to prove the converse part of Theorem~\ref{thm:main}, we shall need the following auxiliary lemma.

\begin{lemma}\label{lem:hminConditional}
Let $\left\lbrace\ket{i}_R\right\rbrace_{i=1,\ldots,n}$ be an orthonormal family of stats in $\hil_R$ and let $\left\lbrace\ket{\psi(i)}_A\right\rbrace_{i=1,\ldots,n}$ be an arbitrary family of stats in $\hil_A$.
 For $\rho_{AR}=\frac{1}{n}\sum_{i=1}^n\proj{\psi(i)}_A\otimes\proj{i}_R$ we have
\begin{align}
 \hmineps(A|R)_\rho = \log\frac{1}{1-\eps^2}\ .
\end{align}
\end{lemma}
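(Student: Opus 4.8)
The plan is to first exploit that $\rho_{AR}$ is a \emph{flat} state. Since the $\ket{i}_R$ are orthonormal, the rank-one projectors $\proj{\psi(i)}_A\otimes\proj{i}_R$ are mutually orthogonal, so
\begin{align}
\rho_{AR}=\tfrac{1}{n}\,\Pi,\qquad \Pi:=\sum_{i=1}^n\proj{\psi(i)}_A\otimes\proj{i}_R ,
\end{align}
with $\Pi$ a projector of rank $n$. The claim then amounts to saying that smoothing a maximally mixed state on an $n$-dimensional support buys exactly $\log\frac{1}{1-\eps^2}$ of conditional min-entropy, irrespective of the directions $\ket{\psi(i)}$. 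Recalling that $\hmineps(A|R)_\rho$ is the supremum of $\hmin(A|R)_\sigma$ over the $\eps$-ball, I would prove the two matching inequalities separately: a lower bound from one well-chosen state in the ball, and an upper bound valid for \emph{every} state in the ball.

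For the lower bound I would take the subnormalized state $\tilde\rho_{AR}=(1-\eps^2)\rho_{AR}$. A short computation with the generalized fidelity shows that its purified distance from $\rho$ is exactly $\eps$, so $\tilde\rho$ lies in the ball. Choosing the candidate $\sigma_R=\frac1n\sum_i\proj{i}_R$ in the definition~\eqref{hmin}, the operator inequality $2^{-\lambda}\id_A\otimes\sigma_R\geq\tilde\rho_{AR}$ decouples block-by-block into $2^{-\lambda}\id_A\geq(1-\eps^2)\proj{\psi(i)}_A$, which holds precisely for $\lambda\leq\log\frac{1}{1-\eps^2}$. Hence $\hmin(A|R)_{\tilde\rho}\geq\log\frac{1}{1-\eps^2}$, establishing the lower bound.

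The upper bound is the main step. Fix any $\sigma_{AR}$ at purified distance at most $\eps$ from $\rho$, and any feasible pair $(\tau_R,\lambda)$ in~\eqref{hmin}, i.e.\ $2^{-\lambda}\id_A\otimes\tau_R\geq\sigma_{AR}$. Testing this inequality against $\Pi\geq0$ and using $\tr[(\id_A\otimes\tau_R)\Pi]=\sum_i\bra{i}\tau_R\ket{i}\leq1$ gives $2^{-\lambda}\geq\tr[\sigma_{AR}\Pi]$, so it suffices to show $\tr[\sigma_{AR}\Pi]\geq1-\eps^2$. Here flatness is decisive: because $\tr\rho=1$ the generalized fidelity collapses to the ordinary one, so the distance bound yields $F(\sigma,\rho)^2\geq1-\eps^2$; and since $\sqrt{\rho}=\Pi/\sqrt{n}$, Cauchy--Schwarz applied to the at most $n$ nonzero eigenvalues of $\Pi\sigma\Pi$ gives $F(\sigma,\rho)=\frac{1}{\sqrt n}\tr\sqrt{\Pi\sigma\Pi}\leq\sqrt{\tr[\sigma\Pi]}$. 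Combining the two, $\tr[\sigma\Pi]\geq F(\sigma,\rho)^2\geq1-\eps^2$, whence $\lambda\leq\log\frac{1}{1-\eps^2}$ for every state in the ball.

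I expect the upper bound to be the only genuine obstacle. Its two quantitative inputs are the correct bookkeeping of the purified distance for subnormalized states (so that the generalized fidelity reduces to $F$ because $\rho$ is normalized) and the estimate $\tr\sqrt{\Pi\sigma\Pi}\leq\sqrt{n}\,\sqrt{\tr[\Pi\sigma\Pi]}$, which crucially uses that $\supp\Pi$ is $n$-dimensional. The choice of $\Pi$ itself as the test operator in the min-entropy optimization is what converts the abstract smoothing constraint into the concrete overlap $\tr[\sigma\Pi]$, and it is precisely because $\rho$ is the flat state $\Pi/n$ that this overlap is controlled by the fidelity.
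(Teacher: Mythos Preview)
Your argument is correct, and it proceeds by a genuinely different route than the paper's. The paper first invokes an external fact (Remark~3.2.5 of~\cite{renato:diss}) that the smoothing optimum for a state classical on $R$ can be taken classical on $R$, writes $\sigma_{AR}=\sum_i\sigma_A^{(i)}\otimes\proj{i}_R$, computes $P(\rho,\sigma)$ and $\hmin(A|R)_\sigma=-\log\sum_i\lmax(\sigma_A^{(i)})$ explicitly, argues that the optimal $\sigma_A^{(i)}$ are rank one along $\ket{\psi(i)}$, and then optimizes the weights $\mu(i)$ (ending up with $\mu(i)=(1-\eps^2)/n$, i.e.\ the same optimizer $(1-\eps^2)\rho$ you exhibit directly). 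Your upper bound bypasses the cq reduction entirely: testing the feasibility inequality against the support projector $\Pi$ and using $\tr[(\id_A\otimes\tau_R)\Pi]=\sum_i\bra{i}\tau_R\ket{i}\leq1$ together with the flat-state estimate $F(\rho,\sigma)=\tfrac{1}{\sqrt n}\tr\sqrt{\Pi\sigma\Pi}\leq\sqrt{\tr[\sigma\Pi]}$ is a clean dual argument that is fully self-contained. The paper's approach is more constructive in that it \emph{derives} the form of the optimizer, while yours is shorter, avoids the external cq-smoothing lemma, and makes transparent that only flatness of $\rho$ together with the specific cq structure of $\Pi$ (which is what forces $\tr[(\id_A\otimes\tau_R)\Pi]\leq1$) is used.
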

\begin{proof}
We use the notation $\cS_{=} (\hil_A):=\left\{\rho_A\in\Herm (\hil_A): \rho_A \geq 0, \tr \rho_A =1\right\}$ and $\cS_{\leq} (\hil_A):=\left\{\rho_A\in\Herm (\hil_A): \rho_A \geq 0, \tr \rho_A \leq1\right\}$.

Recall that $\hmineps(A|R)_\rho$ is defined as the  supremum of $\hmin(A|R)_\sigma$ over all states $\sigma_{AR}\in\cS_{\leq} (\hil_A\otimes\hil_R)$ with $P(\rho_{AR},\sigma_{AR})\leq\eps$, where $P$ denotes the purified distance \cite{tomamichel:duality}.
Since $\hil_A\otimes\hil_R$ is finite-dimensional, there is a state $\sigma_{AR}\in\cS_{\leq} (\hil_A\otimes\hil_R)$ achieving the supremum.
Since $\rho_{AR}$ is classical on $R$, the supremum can be restricted to states $\sigma_{AR}$ which are classical on $R$ as well (see \cite[Remark 3.2.5]{renato:diss}).
Consequently, there is a state $\sigma_{AR}=\sum_i\sigma_A^{(i)}\otimes\proj{i}_R$ with $\sigma_A^{(i)}\in\cS_{\leq} (\hil_A)$ such that $P(\rho_{AR},\sigma_{AR})\leq\eps$ and $\hmineps(A|R)_\rho=\hmin(A|R)_\sigma$.
From the definition of the purified distance \cite{tomamichel:duality}, we have 
\begin{align}
 P(\rho_{AR},\sigma_{AR}) &= \sqrt{1-\left(\tr\sqrt{\sqrt{\rho}\sigma\sqrt{\rho}}\right)^2} \nn\\&= \sqrt{1-\frac{1}{n}\left(\sum_{i=1}^n\sqrt{\langle\psi(i)\vert\sigma_A^{(i)}\vert\psi(i)\rangle}\right)^2}
\end{align}
and from the definition of the conditional min-entropy
\begin{align}
 \hmin(A|R)_\sigma =-\log\sum_{i=1}^n\lmax(\sigma_A^{(i)})\ . 
\end{align}
For a fixed $P(\rho_{AR},\sigma_{AR})$, the entropy $\hmin(A|R)_\sigma$ becomes maximal if we choose $\sigma_A^{(i)}=\mu(i)\proj{\psi(i)}_A$ with $\mu(i)\geq0$.
Given the constraint 
\begin{align}
 P(\rho_{AR},\sigma_{AR})&= \sqrt{1-\frac{1}{n}\left(\sum_{i=1}^n\sqrt{\mu(i)}\right)^2}
\nn\\&\leq\eps
\end{align}
the min-entropy $\hmin(A|R)_\sigma=-\log\sum_i\mu(i)$ then becomes maximal if we choose $\mu(i)=\frac{1-\eps^2}{n}$ for each $i$ and thus
\begin{align}
 \hmineps(A|R)_\rho = \hmin(A|R)_\sigma = \log\frac{1}{1-\eps^2}\ .
\end{align}
\end{proof}

With this lemma at hand, let us now proof the converse part of Theorem~\ref{thm:main}.

\begin{proof}
The proof consists of two parts. First we show that
\begin{align}\label{specificconverse}
\left\langle \left\| \cT(\proj{\phi}_A) - \cT(\pi_A)\right\|_1 \right\rangle_{\ket{\phi}_A} > \delta\ .
\end{align}
Then we show that if this is true the average cannot be small for any state $\omega_B\in\cS_{=} (\hil_B)$ (see the notation introduced in the proof of Lemma~\ref{lem:hminConditional}.

From \cite[Theorem 4.1]{dupuis:decoupling} we have that if for $\rho_{AR}\in\cS_{=}(\hil_{AR})$ and $\tilde{\tau}_{A'B} = d_A \sqrt{\rho_{A'}}\tau_{A'B}\sqrt{\rho_{A'}}$ the entropic condition 
\begin{align}\label{eq:converseCondition}
&\hminarg{4\eps+\sqrt{2\delta}}(A|R)_{\rho}+\hmaxeps(A'B)_{\tilde{\tau}}-\hmineps(B)_{\tilde{\tau}} 
< -\log\frac{2}{ {\eps}^2}\ .
\end{align}
is fulfilled (for arbitrary $\eps, \delta>0$), then
\begin{align}\label{eq:largeDist}
\left\| \cT(\rho_{AR}) - \cT(\rho_A) \otimes \rho_R \right\|_1 > \delta\ .
\end{align}
We apply the above result with
\begin{align}
\rho_{AR} := \int_{\UU(A)} U \proj{\phi}_{A} U\mdag \otimes \ket{U}\bra{U}_R dU\ ,
\end{align}
where the integral is over all unitaries $U$ from the Haar measure on the group of unitaries $\UU(A)$ on $\hil_A$.
We think of $R$ as being a classical register which holds the information about wich unitary $U$ has been applied.
Note that the entropy in Lemma~\ref{lem:hminConditional} is independent of $n$. We may thus consider a continuum limit $\frac{1}{n}\sum_{i=1}^n\rightarrow\int_{\UU(A)}dU$ and conclude that 
\begin{align}
 \hminarg{4\eps+\sqrt{2\delta}}(A|R)_{\rho} = \log\frac{1}{1-\left(4\eps+\sqrt{2\delta}\right)^2}
\end{align}

Since $\rho_A = \int_{\UU(A)} U \proj{\phi}_A U\mdag dU = \pi_A$ we have $\tilde{\tau}_{A'B} = d_A\sqrt{\rho_A}\tau_{A'B}\sqrt{\rho_A} = \tau_{A'B}$. 
The assumption \eqref{eq:converseCondition} is thus fulfilled if the assumption of the converse part of Theorem~\ref{thm:main} is fulfilled.
From \eqref{eq:largeDist} we have then that 
\begin{align}
\nn \delta &< \left\| \cT(\rho_{AR})-\cT(\rho_{A})\otimes\rho_{R}\right\|_{1} \nn\\
\nn &= \left\|\int_{\UU(A)} \cT(U \proj{\phi}_A U\mdag) \otimes \ket{U}\bra{U}_R dU
\right.\nn\\&\qquad\qquad\left. -  \cT(\pi_A) \otimes \int_{\UU(A)} \ket{U}\bra{U}_R dU \right\|_{1} \nn\\
\nn &= \left\|\int_{\UU(A)} \left\{\cT(U \proj{\phi}_A U\mdag) -  \cT(\pi_A)\right\} \otimes \ket{U}\bra{U}_R dU \right\|_{1} \nn\\
\nn &= \int_{\UU(A)}\left\|\left\{\cT(U \proj{\phi}_A U\mdag) -  \cT(\pi_A)\right\} \otimes \ket{U}\bra{U}_R \right\|_{1} dU \nn\\
&= \int_{\UU(A)}\left\|\cT(U \proj{\phi}_A U\mdag) -  \cT(\pi_A) \right\|_{1} dU \nn\\
&= \left\langle \left\| \cT(\proj{\phi}_A) - \cT(\pi_A)\right\|_1 \right\rangle_{\ket{\phi}_A}\ .
\end{align}
The third equality is due to the fact that all operators in the integral act on mutually orthogonal states due to the $R$-factor.

Now, assume by contradiction that there is a state $\omega_B \in \cS_{=}(\hil_B)$ such that 
\begin{align}\label{convpf0}
\left\langle \left\| \cT(\proj{\phi}_A) - \omega_B\right\|_1 \right\rangle_{\ket{\phi}_A} \leq \frac{\delta}{2}\ .
\end{align}
Then, by use of the triangle inequality,
\begin{align}\label{convpf1}
\frac{\delta}{2} 
&\geq \left\langle \left\| \cT(\proj{\phi}_A) - \cT(\pi_A)\right\|_1 \right\rangle_{\ket{\phi}_A} 
- \left\langle \left\| \cT(\pi_A) - \omega_B\right\|_1 \right\rangle_{\ket{\phi}_A} \nn\\
&> \delta - \left\|\cT(\pi_A) -  \omega_B\right\|_{1}\ .
\end{align}
Furthermore, by use of the convexity of the trace distance,
\begin{align}\label{convpf2}
&\left\langle \left\| \cT(\proj{\phi}_A) - \omega_B\right\|_1 \right\rangle_{\ket{\phi}_A} \nn\\
&\quad\geq \left\|\left\langle\cT(\proj{\phi}_A)\right\rangle_{\ket{\phi}_A} - \omega_B\right\|_1 \nn\\
&\quad= \left\|\cT(\pi_A) -  \omega_B \right\|_{1}\ .
\end{align}
Combining inequalities (\ref{convpf1}) and (\ref{convpf2}) yields
\begin{align}
\left\langle \left\| \cT(\proj{\phi}_A) - \omega_B\right\|_1 \right\rangle_{\ket{\phi}_A} > \frac{\delta}{2}
\end{align}
in contradiction to (\ref{convpf0}).
\end{proof}

\section{Independence of the initial state of the system}\label{sec:systemState} 

Before stating our results, let us first describe our setup in detail. Consider a system $\sys$ and an environment $\env$ described by Hilbert spaces $\hsys$ and $\henv$ respectively, which we both assume to be finite \cite{finite_comment}.  
Macroscopic constraints imposed on the system or the environment take the form of subspaces $\somg \subseteq \hsys$ and $\hil_{\Omega_E} \subseteq \henv$ respectively.
If we know, for instance, that the value of some observable $O_S$ lies within some narrow interval, $\somg$ may describe the space spanned by all eigenstates of the operator with eigenvalues within that interval.  
Before placing them into contact, the system and the environment are uncorrelated. 
That is, the initial state of $\hsys \otimes \henv$ at time $t=0$ takes the form $\ket{\phi}_S \otimes \ket{\psi}_E$,
where to explain our result we will for simplicity assume that $\ket{\phi}_S \in \somg$ and $\ket{\psi}_E \in \omg$ are pure states~\cite{mixed_states_comment}. 
The dynamics of the system and the environment, including the interactions between them, is governed by the Hamiltonian $H_{SE}$.
Given that at $t=0$ the system is in the state $\ket{\phi}_S$, it will at time $t$ be in the state
\begin{align}\label{eq:StoSchannel}
 \rho_S^\phi(t) = \tr_E \left[U(t)\left(\proj{\phi}_S\otimes\proj{\psi}_E\right)U(t)\mdag\right]\ ,
\end{align}
where $U(t) = \exp(-iH_{SE}t)$ describes the joint unitary dynamics of $S$ and $E$.
We may understand \eqref{eq:StoSchannel} as a quantum channel $\Omega_S\rightarrow S$, taking $\ket{\phi}_S$ as an input.
Its \cj representation is given by the partial trace $\tau_{\Omega_S'S}(t)=\tr_E\tau_{\Omega_S'SE}(t)$ of
\begin{align}
 \tau_{\Omega_S'SE}(t) = U(t)\left(\proj{\Psi}_{\Omega_S\Omega_S'}\otimes\proj{\psi}_E\right)U(t)\mdag\ .
\end{align}
where by definition $U(t)$ acts on $S$ and $E$, but not on $\Omega_S'$.
Applying condition \eqref{eq:cond1}, we obtain that if at any time $t$ we have $\hmineps(\Omega_S'S)_{\tau(t)} \gtrsim \hmaxeps(S)_{\tau(t)}$, almost all possible initial states $\ket{\phi}_S$ in $\somg$ (possible input states to the channel), will have evolved close to
\begin{align}
 \tau_{S}(t) &= \tr_E\tr_{\Omega_S'}\left[U(t)\left(\proj{\Psi}_{\Omega_S\Omega_S'}\otimes\proj{\psi}_E\right)U(t)\right] \nn\\
&=\tr_E\left[U(t)\left(\pi_{\Omega_S}\otimes\proj{\psi}_E\right)U(t)\right]\ .
\end{align}
This is the state $S$ would be in at time $t$ if at $t=0$ it had been maximally mixed on $\Omega_S$.
Note that $\tau_{\Omega_S'SE}(t)$ is a pure state. Thus, applying \eqref{eq:pureSym} we find that the condition $\hmineps(\Omega_S'S)_{\tau(t)} \gtrsim \hmaxeps(S)_{\tau(t)}$ is equivalent to
\begin{align}\label{eq:indepState}
 \hmaxeps(S)_{\tau_{SE}(t)} \lesssim \hmineps(E)_{\tau_{SE}(t)}\ .
\end{align}
The state $\tau_{SE}(t)$ appearing in this condition is given by
\begin{align}\label{eq:tau}
 \tau_{SE}(t) = U(t)\left(\pi_{\Omega_S}\otimes\proj{\psi}_E\right)U(t)\mdag\ ,
\end{align}
i.e., the global state of $SE$ if at $t=0$ it had been maximally mixed on $\Omega_S$.
Note that $\hmaxeps(S)_{\tau(0)}\simeq\log d_{\Omega_S}$ and $\hmineps(E)_{\tau(0)}\simeq0$ (neglecting small corrections due to smoothing), so \eqref{eq:indepState} will certainly not be fulfilled for small enough times $t$,
and it does of course depend on the details of $H_{SE}$ and $\ket{\psi}_E$ whether~\eqref{eq:indepState} can ever be satisfied at a later point in time.
Conversely, applying condition \eqref{eq:cond2} to the channel $\Omega_S\rightarrow S$ given by \eqref{eq:StoSchannel}  and that again due to purity of $\tau_{\Omega_S'SE}(t)$ and \eqref{eq:pureSym} we have $\hmaxeps(\Omega_S'S)_{\tau(t)}=\hmaxeps(E)_{\tau(t)}$ we find that as long as
\begin{align}\label{eq:notIndepState}
  \hmineps(S)_{\tau_{SE}(t)} \gtrsim \hmaxeps(E)_{\tau_{SE}(t)}
\end{align}
different initial states from $\Omega_S$ will not (yet) have evolved close to the same state.

Let us point out again that the great benefit of our result lies in the fact that it allows us to make statements about how \emph{almost all} initial states from $\Omega_S$ evolve, by analyzing entropy changes of the \emph{single} state $\tau_{SE}(t)$.
This state depends on the space $\hil_{\Omega_S}$ (i.e., the ``macroscopic constraint'') itself, the initial state of the environment $\ket{\psi}_E$, the Hamiltonian $H_{SE}$ and the time $t$ that has passed since we have put $S$ in contact with $E$,
but not on any of the individual initial states that we might place the system in.

Since we have modelled both $S$ and $E$ to be of finite dimension, the state $\tau_{SE}(t)$ will come arbitrarily close to its initial state $\tau_{SE}(0)$ in finite time, implying that condition~\eqref{eq:notIndepState} will be fulfilled: 
the system regains information about its initial state, even if it has been lost at intermediate times. 
However, for environments $E$ that consist of a macroscopic number of particles, this recurrence times will typically be very large \cite{recurrence_comment}.

We usually think of the environment to consist of much more particles than the system and hence to also be dimension-wise much larger. 
If it is, on the other hand, the case that the environment $E$ is dimension-wise sufficiently smaller than the restricted system $\Omega_S$, it will not have enough degrees of freedom to ``absorb'' all the information about the initial state of the system; hence the system will for all times retain some memory about almost any possible initial state. 
This will be the case if only a few particles are effectively interacting with a relatively large $S$.
The above intuition can be made rigorous by estimating the entropic terms in \eqref{eq:notIndepState}. Namely we have by use of the strong subbaditivity of the smooth min-entropy \cite[footnote 7]{tomamichel:duality}, the chain rule \eqref{minchain1eqn} and the definitions of the smooth entropy measures and $\tau_{SE}(t)$ that
\begin{align}
\hmineps(S)_{\tau(t)}-\hmaxeps(E)_{\tau(t)}
&\geq \hmineps(S)_{\tau(t)}-\log d_E \nn\\
&\geq \hmineps(S|E)_{\tau(t)}-\log d_E \nn\\
&\geq \hmineps(SE)_{\tau(t)}-2\log d_E \nn\\
&\geq \log d_{\Omega_S}-2\log d_E\ .
\end{align} 
Condition \eqref{eq:notIndepState} will thus be fulfilled for all times $t$ if $\log d_{\Omega_S} > 2\log d_E$.

\subsection{Example: time-scales in systems with an effective light-cone}
We can guarantee that the system still ``remembers'' its initial state as long as \eqref{eq:notIndepState} with $\tau_{SE}(t)$ as defined in \eqref{eq:tau} is fulfilled. 
It is thus interesting to study how fast $\hmineps(S)_{\tau}$ decreases from its initial value $\log d_{\Omega_S}$ and how fast $\hmaxeps(E)_{\tau}$ increases from zero. The answer to this question of course depends on the speficic model under consideration. 
In a physical model with only local interactions results of the Lieb-Robinson type like \cite{jens:entanglement} may be applied to bound the rates with which the min- and max-entropies can be changed. 

Let $S$ be a connected subset of a qubit lattice, $E$ its complement and $\Omega_S=S$. 
If $S$ is truly expanded in all spatial dimensions of the lattice, we expect for dimensional and geometrical reasons that 
\begin{align}
 \hmaxeps(E)_{\tau_{SE}(t)}\in O\left(|\partial S|\cdot v_{\text{LR}}\cdot t\right)\ ,
\end{align}
where $v_{\text{LR}}$ is the Lieb-Robinson velocity and $|\partial S|$ denotes the number of spins on the boundary of $S$ that directly interact with $E$. This can indeed be shown in a conceptually simple brute force estimate of all relevant terms applicable to non-local Hamiltonians.
Similarly, 
\begin{align}
 \log d_S - \hmineps(S)_{\tau_{SE}(t)} \in O\left(|\partial S|\cdot v_{\text{LR}}\cdot t\right)\ .
\end{align}
Since $\log d_S \sim |\partial S|\cdot\ell$, with $\ell$ the linear size of $S$, we find with criterion \eqref{eq:notIndepState} a lower bound $O\left(\ell/v_{\text{LR}}\right)$ on the time needed for different initial states of $S$ to evolve to the same state, and hence on the thermalization time of $S$.
Note that the very same lower bound has been derived in \cite{sergey:LR} for the time needed to prepare topological order in $S$, starting from a state which does not have topological order. \cite{experiment_comment}

Recent work tackled the problem of thermalization time-scales from other angles: in \cite{tony:finiteTime} sufficient time-scales for \emph{equilibration} were derived. In contrast, note that we are 
interested in \emph{necessary} time-scales for \emph{thermalization}.
Different authors have studied thermalization time-scales for Hamiltonians with randomly chosen eigenstates~\cite{randomHam1,randomHam2,randomHam3}. In contrast, our results apply to every specific Hamiltonian $H_{SE}$.

\subsection{Example: depolarizing noise}
Consider a system $S^n$ of $n$ qubits each suffering the influence of depolarizing noise. Each single-qubit state $\rho_S$ is mapped to 
\begin{align}
 (1-p)\cdot\rho_S + \sum_{i=1}^3 \frac{p}{3} \cdot\sigma_i \rho_S \sigma_i = (1-\frac{4}{3}p)\cdot\rho_S + \frac{4}{3}p\cdot\pi_S\ . 
\end{align}
Using a Stinespring dilation, this mapping can be expressed in the form $\tr_E\left[U\left(\rho_S\otimes\proj{\psi}_E\right)U\mdag\right]$ with $\dim\hil_E=4$, 
 $\ket{\psi}_E = \sqrt{1-p}\ket{0}_E+\sum_{i=1}^3\sqrt{\frac{p}{3}}\ket{i}_E$, and $U_{SE}=\sum_{\alpha=0}^3 \sigma_\alpha\otimes\proj{\alpha}_E$.
This gives rise to a state $\tau_{SE}$ as used in criterions \eqref{eq:indepState} and \eqref{eq:notIndepState}.
Since we are dealing here with an i.i.d.\ scenario, the von Neumann entropy becomes relevant for large $n$. 
There is a critical probability $p_c$ determined through $\vN(S)_\tau = \vN(E)_\tau$ above which all but exponentially few states of the $n$-qubit space $\hil_S^{\otimes n}$ are mapped to $\tau_S^{\otimes n}=\pi_S^{\otimes n}$.
We have $\vN(S)_\tau = \log2$ and $\vN(E)_\tau = \vN\left(1-p,\frac{p}{3},\frac{p}{3},\frac{p}{3}\right)$, yielding $p_c=18.93\%$ \cite{dilation_comment}. 
While only at $p_{\text{max}}=75\%$ literally all $n$-qubit states are mapped to $\pi_S^{\otimes n}$, the statement is true for almost all of them already at $p_c\approx p_{\text{max}}/4$.
Our value for $p_c$ coincides with the Hashing Bound \cite{bennett:entanglement} and is astonishingly close to the recently established threshold up to which topological codes can withstand depolarizing noise \cite{bombin:resilience}.

\section{Independence of the initial state of the environment}\label{sec:environmentState} 
We proceed to show a similar statement about the role of the initial state of the \emph{environment}. Let us thus now fix the state of the system $\ket{\phi}_S$.
We can then understand
\begin{align}\label{eq:EtoSchannel}
 \rho_S^\psi(t) = \tr_E \left[U(t)\left(\proj{\phi}_S\otimes\proj{\psi}_E\right)U(t)\mdag\right]\ ,
\end{align}
i.e.\ the dependence of the state of $S$ at time $t$ on the initial state $\ket{\psi}_E\in\hil_{\Omega_E}$, as a quantum channel $\Omega_E\rightarrow S$.
This channel's \cj representation is given by $\tilde{\tau}_{S\Omega_E'}(t)=\tr_E\tilde{\tau}_{S\Omega_E\Omega_E'}(t)$, where
\begin{align}
 \tilde{\tau}_{SE\Omega_E'}(t) = U(t)\left(\proj{\phi}_S\otimes\proj{\Psi}_{\Omega_E\Omega_E'}\right)U(t)\mdag
\end{align}
is again a pure state.
Using \eqref{eq:pureSym} we thus find that 
\begin{align}\label{eq:envSym}
 \ensuremath{{\rm H}}_{\min/\max}^\eps(S\Omega_E')_{\tilde{\tau}(t)} = \ensuremath{{\rm H}}_{\min/\max}^\eps(E)_{\tilde{\tau}(t)}\ ,
\end{align}
where the entropies on $E$ can be evaluated for the state
\begin{align}\label{eq:tauEnv}
	\tilde{\tau}_{SE}(t) = U(t) \left( \proj{\phi}_S \otimes \pi_{\Omega_E} \right) U(t)^\dagger\ .
\end{align}
This state is perfectly analogous to the state $\tau_{SE}(t)$ in \eqref{eq:tau}, which appears in our criteria \eqref{eq:indepState} and \eqref{eq:notIndepState} for independence of the initial state of the \emph{system}.
The sole difference is that the state $\tilde{\tau}_{SE}(t)$ is obtained from evolving a definite state on $S$ tensored with a maximally mixed state on some subspace $\Omega_E$ of $E$, rather than the other way round.

Applying criterion \eqref{eq:cond1} and \eqref{eq:envSym} to the channel $\Omega_E\rightarrow S$ \eqref{eq:EtoSchannel}, we find that if at time $t$
\begin{align}\label{eq:envIndep}
	\hmineps(E)_{\tilde{\tau}_{SE}(t)} \gtrsim \hmaxeps(S)_{\tilde{\tau}_{SE}(t)}\ ,
\end{align}
all but exponentially few initial states $\ket{\psi}_E\in\Omega_E$ (in the sense of Theorem~\ref{thm:main}) will yield an evolved state $\rho_S^\psi(t)$ of the system which is close to $\tilde{\tau}_{S}(t)$.
In more physical terms, the evolved state of the system then only depends on macroscopic parameters (like temperature or pressure) of the environment, but not on its precise microstate.
Since $\tilde{\tau}_{SE}(0)$ has entropy $\log d_{\Omega_E}$ in $E$ and zero entropy in $S$, condition \eqref{eq:envIndep} is certainly fulfilled for short enough times $t$.

On the other hand, applying criterion \eqref{eq:cond2} and \eqref{eq:envSym} to the channel \eqref{eq:EtoSchannel} we find that if 
\begin{align}
	\hmaxeps(E)_{\tilde{\tau}_{SE}(t)} \lesssim \hmineps(S)_{\tilde{\tau}_{SE}(t)}\ ,
\end{align}
then a substantial fraction of initial states of $\Omega_E$ lead to different states of the system at time $t$.
Our condition is again tight up to differences in min- and max-entropies of the state $\tilde{\tau}_{SE}(t)$, which vanish in the asymptotic limit.

\subsection{Example: large environment}
If the environment is very large compared to the 
system (i.e. $\log d_{\Omega_E} > 2 \log d_{S}$) then \eqref{eq:envIndep} will always be fulfilled and the 
evolved state of the system will be the same for all but exponentially few initial states of the environment restricted to $\hil_{\Omega_E}$. 
This is the situation usually encountered in physical scenarios.
Indeed, we have
\begin{align}
&\hmineps(E)_{\tilde{\tau}_{SE}(t)} - \hmaxeps(S)_{\tilde{\tau}_{SE}(t)} \nn\\
&\quad\geq \hmineps(E|S)_{\tilde{\tau}_{SE}(t)} - \hmaxeps(S)_{\tilde{\tau}_{SE}(t)} \nn\\
&\quad\geq \hmineps(ES)_{\tilde{\tau}_{SE}(t)} - \log d_S - \hmaxeps(S)_{\tilde{\tau}_{SE}(t)} \nn\\
&\quad\geq \log d_E - 2\log d_S\ . 
\end{align}
The first inequality is based on the strong subbaditivity of the smooth min-entropy \cite[footnote 7]{tomamichel:duality}, the second on the chain rule \eqref{minchain1eqn}, and the third on the definition of the smooth min- and max-entropies and the state $\tilde{\tau}_{SE}(t)$.

Formally, we obtain from Theorem~\ref{thm:main} that
\begin{align}\label{eq:largeEnv}
\Pr_{\ket{\psi}_E}\left[\trdist{\rho^\psi_S\left(t\right)}{\tilde{\tau}_S(t)}>\frac{d_S}{\sqrt{d_{\Omega_E}}}+d_{\Omega_E}^{-1/3}\right]<e^{-d_{\Omega_E}^{1/3}/16}\ ,
\end{align}
where the probability is computed over the choice of $\ket{\psi}_E$ from the Haar measure on $\hil_{\Omega_E}$.

Note that the condition ``all but exponentially few'' is not a mathematical artifact of our proof. For some examples, one can find very specific initial states of the environment that will lead to observable effects on the system even if the environment is large. 
A similar statement was shown before for \emph{almost all} times~\cite{linden:evolution}. Since our result holds for \emph{all} times, it does in particular imply said result. 

\subsection{Example: absence of thermalization}
Finally, we consider the question whether it is at all possible for the system to forget about its initial conditions. 
In \cite{linden:evolution} it is shown that the temporal average of \sys will be independent of its initial state if the relevant energy eigenstates 
of $H_{SE}$ are sufficiently entangled. Here, we prove a converse result. 
If $E$ is sufficiently larger than $S$ we know that the time-evolved state of $S$ will be close to $\tilde{\tau}_{S}(t)$, the partial trace of \eqref{eq:tauEnv}, for almost all initial states of $E$ (we consider the case $\hil_{\Omega_E}=\hil_E$ here). 
This allows us to derive sufficient conditions under which $\rho_S(t)$, the evolved state of $S$, stays close to its initial state $\rho_S(0)$ for all times and thus never thermalizes. 
Roughly, we show that if those eigenstates of the Hamiltonian which on $S$ have most overlap with the initial state $\rho_S(0)$ are not sufficiently entangled, the state of the system will remain close to its initial state for all times, for all but exponentially few initial states $\ket{\psi}_E$ of the environment.

Our result is a stronger form of a recent result of~\cite{gogolin:absence}. 
In this reference it is shown, that even for a non-integrable Hamiltonian $H_{SE}$ the system may stay close to its initial state for most times, disproving the long-held conjecture that all non-integrable systems thermalize.
The most important advantage of our result is that we can make statements about the time-evolved state of \sys (as opposed to statements about temporal averages) and do not require \sys to be small. 
Furthermore, we do not require that \emph{all} energy eigenstates of $H_{SE}$ be close to product (as in~\cite{gogolin:absence}) but only the ones which are most relevant for the particular initial state of the system.

Let us now explain our result more precisely.
Note that the energy eigenstates $\left\{\ket{E_k}_{\sys\env}\right\}_{k}$ form a basis of the product space $\hil\ssys\otimes\hil\senv$. 
Assume that we want to approximate this basis by a product basis $\left\{\ket{i}\ssys\otimes\ket{j}\senv\right\}_{i, j}$. That is, to each energy eigenstate $\ket{E_k}_{\sys\env}$ we assign the element of the product basis $\ket{i}\ssys\otimes\ket{j}\senv$ which best approximates it and assume that this correspondence is one-to-one. Let $I(i)$ denote the set of energy eigenstates which are assigned to a state of the form $\ket{i}\ssys\otimes\ket{j}\senv$, with a fixed $i$ and an arbitrary $j$. We introduce the quantity $\delta(i)$ to quantify how well the energy eigenstates in $I(i)$ are approximated by an element of the product basis,
\begin{align}
\delta(i) := \min_{\ket{E_k}\in I(i)} \max_{j=1,\ldots,d\senv} \left\{\left| \langle E_k |_{\sys\env} | i \rangle\ssys | j \rangle\senv   \right|\right\}\ .
\end{align}
Let $\rho\ssys(t)$ denote the state of the system at time $t$ and assume that its initial state was $\rho\ssys(0) = \proj{i}\ssys$. Then at any time $t$ the probability that $\rho\ssys(t)$ is further away from its initial state than $4\delta(i)\sqrt{1-\delta(i)^2}$ (in trace distance $\left\|\ldots\right\|_1$) is exponentially small. This radius is small if $\delta(i)$ is close to $1$, that is, if the enery eigenstates which on \sys are most similar to $\proj{i}\ssys$ are sufficiently close to product. The probability is computed over the choice of the initial state of the environment $\ket{\psi}\senv$.

Formally, we have with $\rho\ssys(t)=\tr_E\left[U(t)\left(\proj{i}\ssys\otimes\proj{\psi}_E\right)U(t)\mdag\right]$ that at any time $t$
\begin{align}\label{eq:absence0}
&\Pr_{\ket{\psi}_E}\left[\trdist{\rho_S(t)}{\proj{i}_S} \right.\nn\\&\qquad\left. > 4\delta(\phi)\sqrt{1-\delta(\phi)^2}+\frac{d_S}{\sqrt{d_E}}+d_E^{-1/3}\right] \nn\\& <e^{-d_E^{1/3}/16}\ ,
\end{align}
where the proability is over the choice of $\ket{\psi}_E$ from the uniform measure on $\hil_E$. 
We think of the terms $\frac{d_S}{\sqrt{d_E}}+d_E^{-1/3}$ as negligible against $4\delta(i)\sqrt{1-\delta(i)^2}$, as Hilbert space dimensions grow exponentially with the number of constituent particles  and the environment will be large in typical situations of physical interest.

In order to prove \eqref{eq:absence0}, we note that we already know from \eqref{eq:largeEnv} that if $d_E$ (recall that we have identified $E=\Omega_E$ here) is sufficiently larger than $d_S$, 
then almost all initial states $\ket{\psi}_E\in\hil_E$ of the environment will lead to the same evolved state $\tilde{\tau}_{S}(t)$ of the system, given by the partial trace of \eqref{eq:tauEnv}.
This allows us to bound 
\begin{align}\label{eq:absence1}
 \trdist{\rho_S(t)}{\proj{i}_S} \leq \trdist{\rho_S(t)}{\tilde{\tau}_{S}(t)} + \trdist{\tilde{\tau}_{S}(t)}{\proj{i}_S}\ ,
\end{align}
where the first summand on the right hand side is with high probability small due to \eqref{eq:largeEnv}. 
The explicit form of $\tilde{\tau}_{S}(t)$ allows us to bound $\trdist{\tilde{\tau}_{S}(t)}{\proj{i}_S}$, using the assumed properties of the energy eigenstates.
Namely, we find
\begin{align}\label{eq:absence2}
 \trdist{\tilde{\tau}_{S}(t)}{\proj{i}_S} \leq 4\delta(i)\sqrt{1-\delta(i)^2}\ .
\end{align}
The calculation is rather tedious and can be found in Appendix~\ref{sec:absence}. 
Combining \eqref{eq:largeEnv}, \eqref{eq:absence1}, and \eqref{eq:absence2} then yields the assertion, \eqref{eq:absence0}.

\section{Discussion}\label{sec:discussion}
We have shown that the problem of understanding how thermalizing systems become independent of their initial states can be simplified considerably -- for almost all states it suffices to understand the temporal evolution of the special state $\tau_{SE}(t)$, or rather changes in entropy for this state.
The emergence of such a special state 
is indeed somewhat analogous to the setting of channel coding, where the maximally entangled state plays an important role in quantifying a channels capacity to carry
quantum information. Note, however, that we do not ask about how much quantum information could be conveyed by using any form of coding scheme. 
Furthermore, merely asking whether the state of the system depends on its initial state after some time, or in more information theoretic terms, asking whether the output state of the channel depends on its input state does (unlike in the classical world) not immediately answer the question whether this channel is useful for transmitting quantum 
information~\cite{channel_comment}.

Note that all our statements hold ``for almost all initial states from the Haar measure'', i.e., we make statements about the volume of states. Of course, from a given starting state it is in general not the case
that all such states could be reached in a physical system, and hence one might question the relevance of our results. 
Note, however, that our approach applies to \emph{any} set of unitaries (describing the different initial states) which have such a decoupling effect.
In \cite{harrow:2design} it is shown that random two-qubit interactions efficiently approximate the first and second moments of the Haar distribution, thereby constituting \emph{approximate 2-designs}. This is all one needs for decoupling~\cite{patrick:blackHole,oleg:decoupling}. Our findings therefore apply to efficiently preparable sets of initial states. It is an interesting open question what other sets of unitaries have this property.

\section{Acknowledgements}
We gratefully acknowledge Andrew Doherty, Jens Eisert, Christian Gogolin and Tony Short for interesting discussions and comments.
This research was supported by the National Research Foundation and Ministry of Education, Singapore.

\onecolumngrid
\newpage

\appendix
\begin{center}{\bfseries APPENDIX}\end{center}

\section{Proof of Eq.~\eqref{eq:absence2}}\label{sec:absence}

In order to prove \eqref{eq:absence2}, we need a somewhat more involved notation than in the main part of this article. Consider a basis $\left\{\ket{i}_S\right\}_{i=1,\ldots,d_S}$ of $\hil_S$ and a basis $\left\{\ket{j}_E\right\}_{j=1,\ldots,d_E}$ of $\hil_E$. Both $\left\{\ket{i}_S\otimes\ket{j}_E\right\}_{i=1,\ldots,d_S , j=1,\ldots,d_E}$ and $\left\{\ket{E_k}_{SE}\right\}_{k=1,\ldots,d_S d_E}$ form bases of the Hilbert space $\hil_S\otimes\hil_E$. We consider mappings between these two bases, i.e.\ mappings of the form
\begin{align}
&\left\{1,\ldots,d_Sd_E\right\} & \longrightarrow & \quad \left\{1,\ldots,d_S\right\}\times\left\{1,\ldots,d_E\right\} \nn\\ 
&\qquad \quad k & \longmapsto & \qquad \quad \left(\xi(k),\hat{\xi}(k)\right)\ 
\end{align}
and define
\begin{align}
f_k := F\left(\ket{E_k},\ket{\xi(k)}_S\ket{\hat{\xi}(k)}_E\right)\ .
\end{align}
We are interested in how good product states of the form $\ket{\phi}_S\otimes\ket{j}_E$ (with a fixed $\phi\in\left\{1,\ldots,d_S\right\}$ and arbitrary $j$) can be  approximated by such a mapping $k\mapsto\left(\xi(k),\hat{\xi}(k)\right)$. Note that if this product states have high overlap with an energy eigenstate, this necessarily implies that the eigenstate is lowly entangled. We restrict to mappings which are injective and pick the one which maximizes $f_k$ for states of the form $\ket{\phi}_S\otimes\ket{j}_E$. Formally, we are interested in the quantity
\begin{align}\label{deltadefi}
\delta(\phi) := \max_{k\mapsto\left(\xi(k),\hat{\xi}(k)\right)} \min_{k} \left\{f_k : \xi(k)=\phi \text{\: and \: } k\mapsto\left(\xi(k),\hat{\xi}(k)\right) \text{is injective}\right\}\ .
\end{align}

We will show the following.

\begin{lemma}\label{memorythm}
With the notation introduced above, consider an initial state $\ket{\phi}_S$ with $\phi\in\left\{1,\ldots,d_S\right\}$. Assume that $\delta(\phi)>\frac{1}{\sqrt{2}}$. Then with 
\begin{align}
 \tau_S(t)=\tr_E\left[U(t)\left(\proj{\phi}_S\otimes\pi_E\right)U(t)\mdag\right]\ .
\end{align}
 we have
\begin{align}\label{memoryeqn}
\left\|\tau_S(t)-\proj{\phi}_S\right\|_1
\leq
4\delta(\phi)\sqrt{1-\delta(\phi)^2}
\end{align}for all times $t$.
\end{lemma}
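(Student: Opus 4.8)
The plan is to diagonalise the dynamics in the energy eigenbasis and reduce the statement to a family of single-vector estimates. Since $U(t)\ket{E_k}=e^{-iE_k t}\ket{E_k}$ and $\pi_E=\frac{1}{d_E}\sum_{j=1}^{d_E}\proj{j}_E$, writing $\ket{\phi,j}$ for $\ket{\phi}_S\otimes\ket{j}_E$ and $\ket{\psi_j(t)}:=U(t)\ket{\phi,j}$ we have $\tau_S(t)=\frac{1}{d_E}\sum_j\tr_E\proj{\psi_j(t)}$, whereas $\proj{\phi}_S=\frac{1}{d_E}\sum_j\tr_E\proj{\phi,j}$. By the triangle inequality and the fact that the partial trace cannot increase the trace norm, it then suffices to bound $\trdist{\proj{\psi_j(t)}}{\proj{\phi,j}}$ for each individual $j$ by $4\delta(\phi)\sqrt{1-\delta(\phi)^2}$ and to average over $j$.

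Second, I would exploit the definition of $\delta(\phi)$. The optimal injective mapping $k\mapsto(\xi(k),\hat{\xi}(k))$ assigns to each pair $(\phi,j)$ a distinct energy eigenstate $\ket{E_{k_j}}$ with $|\braket{E_{k_j}}{\phi,j}|=f_{k_j}\geq\delta(\phi)$. I would decompose $\ket{\phi,j}=\alpha_j\ket{E_{k_j}}+\beta_j\ket{\chi_j}$ with $\ket{\chi_j}\perp\ket{E_{k_j}}$, $\|\chi_j\|=1$, so that $\alpha_j=\braket{E_{k_j}}{\phi,j}$ with $|\alpha_j|\geq\delta(\phi)$ and $|\beta_j|^2=1-|\alpha_j|^2$. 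The key computation is the overlap $\braket{\phi,j}{\psi_j(t)}=\bra{\phi,j}U(t)\ket{\phi,j}$: because $U(t)$ merely phases $\ket{E_{k_j}}$ and leaves the image of $\ket{\chi_j}$ orthogonal to $\ket{E_{k_j}}$, both cross terms vanish and one is left with
\begin{align}
 \braket{\phi,j}{\psi_j(t)} = |\alpha_j|^2 e^{-iE_{k_j}t} + |\beta_j|^2\,\bra{\chi_j}U(t)\ket{\chi_j}\ .
\end{align}
Since $|\bra{\chi_j}U(t)\ket{\chi_j}|\leq 1$, the reverse triangle inequality gives $|\braket{\phi,j}{\psi_j(t)}|\geq|\alpha_j|^2-|\beta_j|^2=2|\alpha_j|^2-1$, which is strictly positive precisely because $\delta(\phi)>1/\sqrt{2}$.

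Third, I would convert this fidelity estimate into a trace-distance bound via $\trdist{\proj{a}}{\proj{b}}=2\sqrt{1-|\braket{a}{b}|^2}$ for pure states. Setting $x:=|\alpha_j|^2$, the bound $|\braket{\phi,j}{\psi_j(t)}|\geq 2x-1\geq 0$ yields $\trdist{\proj{\psi_j(t)}}{\proj{\phi,j}}\leq 2\sqrt{1-(2x-1)^2}=4\sqrt{x(1-x)}$. The map $x\mapsto x(1-x)$ is decreasing on $[\tfrac{1}{2},1]$, and since $x\geq\delta(\phi)^2>\tfrac{1}{2}$ I may replace $x$ by $\delta(\phi)^2$, obtaining $4\delta(\phi)\sqrt{1-\delta(\phi)^2}$ for every $j$. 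Averaging over $j$ as in the first step then closes the argument.

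The main obstacle is obtaining the \emph{tight} prefactor. A naive route — estimating $\trdist{\proj{\psi_j(t)}}{\proj{\phi,j}}$ by a triangle inequality through the common reference $\proj{E_{k_j}}$ — gives only $4\sqrt{1-\delta(\phi)^2}$ and loses the extra factor $\delta(\phi)$. Recovering it requires the explicit cancellation of the cross terms in the overlap above, together with the monotonicity of $x(1-x)$; this is also exactly where the hypothesis $\delta(\phi)>1/\sqrt{2}$ enters, both to keep $2\delta(\phi)^2-1$ positive and to place $x$ on the decreasing branch.
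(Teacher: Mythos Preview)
Your argument is correct and actually cleaner than the paper's. Both proofs hinge on the same core inequality $|\langle\phi,j|U(t)|\phi,j\rangle|\geq 2f_{k_j}^2-1$ and on the assumption $\delta(\phi)>1/\sqrt{2}$ to keep this positive, but they package it differently. The paper introduces an auxiliary product unitary $V=\sum_k e^{-iE_kt}\proj{\xi(k)}_S\otimes\proj{\hat\xi(k)}_E$, shows $\tr_E[V(\proj{\phi}_S\otimes\pi_E)V^\dagger]=\proj{\phi}_S$, purifies $\pi_E$ to $\ket{\Psi}_{EE'}$, and bounds a single pure-state fidelity on $SEE'$; the sum over $j$ then appears inside the overlap and is averaged \emph{before} squaring. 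You instead expand $\pi_E=\frac{1}{d_E}\sum_j\proj{j}_E$ directly, bound each $\|\proj{\psi_j(t)}-\proj{\phi,j}\|_1$ separately on $SE$, and average \emph{after} taking trace distances. Your route avoids the auxiliary unitary and the purification altogether, and the observation that the cross terms in $\bra{\phi,j}U(t)\ket{\phi,j}$ vanish (because $U(t)$ preserves the orthogonal complement of $\ket{E_{k_j}}$) replaces the paper's longer manipulation of the double sum over $k,l$. The monotonicity of $x\mapsto x(1-x)$ on $[1/2,1]$ is exactly what lets you recover the sharp factor $\delta(\phi)$, matching the paper's final bound.
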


By definition (\ref{deltadefi}) the requirement $\delta(\phi)>\frac{1}{\sqrt{2}}$ requires that $f_k>\frac{1}{\sqrt{2}}$ if $\xi(k)=\phi$. If this condition is fulfilled, the r.h.s.\ of (\ref{memoryeqn}) is smaller than $2$ and thus non-trivial. 

\begin{proof}
In order to shorten our notation we introduce the shorthands $\phi_S\equiv\proj{\phi}_S$, $\Psi_{EE'}\equiv\proj{\Psi}_{EE'}$, $\xi(k)_S\equiv\proj{\xi(k)}_S$ and $\hat{\xi}(k)_E\equiv\proj{\hat{\xi}(k)}_E$. Sums with summation index $k$ or $l$ go from $1$ to $d_Sd_E$ and sums with summation index $r$ go from $1$ to $d_E$. 
By use of the assumed injectivity (and hence also bijectivity) of the mapping, we have
\begin{align}\label{kroneckers}
\delta_{\xi(k),\xi(l)}\delta_{\hat{\xi}(k),\hat{\xi}(l)}=\delta_{kl}\ . 
\end{align}
This implies that $\sum_ke^{-i E_kt}\xi(k)_S\otimes\hat{\xi}(k)_E$ is a unitary, since
\begin{align}
&\left(\sum_ke^{-i E_kt}\xi(k)_S\otimes\hat{\xi}(k)_E\right)\left(\sum_le^{-i E_lt}\xi(l)_S\otimes\hat{\xi}(l)_E\right)\mdag
\nn\\&\quad=
\left(\sum_ke^{-i E_kt}\xi(k)_S\otimes\hat{\xi}(k)_E\right)\left(\sum_le^{+i E_lt}\xi(l)_S\otimes\hat{\xi}(l)_E\right)
\nn\\&\quad=
\sum_{kl}e^{-i (E_k-E_l)t}\delta_{\xi(k),\xi(l)}\delta_{\hat{\xi}(k),\hat{\xi}(l)}
\ketbra{\xi(k)_S}{\xi(l)_S}\otimes\ketbra{\hat{\xi}(k)}{\hat{\xi}(l)}_E
\nn\\&\quad=
\sum_{k}e^{-i (E_k-E_k)t}
\ketbra{\xi(k)}{\xi(k)}_S\otimes\ketbra{\hat{\xi}(k)}{\hat{\xi}(k)}_E
\nn\\&\quad=\id_{SE}\ .
\end{align}
We first show that $\tau_S(t)$ has high fidelity with the state 
\begin{align*}
\tr_E \left[ 
\left(\sum_ke^{-i E_kt}\xi(k)_S\otimes\hat{\xi}(k)_E\right) 
\left(\phi_S\otimes\pi_E\right)
\left(\sum_le^{+i E_lt}\xi(l)_S\otimes\hat{\xi}(l)_E\right)
\right]
\end{align*}
and then show that this state is identical with $\phi_S$. 
Since the fidelity can only increase under partial traces, that is, it can only decrease if we calculate it for purifications of the actual states, so
\begin{align}
F^2&\left\{
\tau_S(t) ,
\tr_E \left[ 
\left(\sum_ke^{-i E_kt}\xi(k)_S\otimes\hat{\xi}(k)_E\right) 
\left(\phi_S\otimes\pi_E\right)
\left(\sum_le^{+i E_lt}\xi(l)_S\otimes\hat{\xi}(l)_E\right)
\right]
\right\}
\nn\\ &=
F^2\left\{
\tr_E \left[
\left(\sum_ke^{-i E_kt}\proj{E_k}\right) 
\left(\phi_S\otimes\pi_E\right)
\left(\sum_le^{+i E_lt}\proj{E_l}\right)
\right]\right. , \nn\\
&\qquad\tr_E \left.\left[ 
\left(\sum_ke^{-i E_kt}\xi(k)_S\otimes\hat{\xi}(k)_E\right) 
\left(\phi_S\otimes\pi_E\right)
\left(\sum_le^{+i E_lt}\xi(l)_S\otimes\hat{\xi}(l)_E\right)
\right]
\right\}
\nn\\ &\geq
F^2\left\{
\left(\sum_ke^{-i E_kt}\proj{E_k}\right) 
\left(\phi_S\otimes\Psi_{EE'}\right)
\left(\sum_le^{+i E_lt}\proj{E_l}\right)
\right. , \nn\\
&\qquad\left.
\left(\sum_ke^{-i E_kt}\xi(k)_S\otimes\hat{\xi}(k)_E\right) 
\left(\phi_S\otimes\Psi_{EE'}\right)
\left(\sum_le^{+i E_lt}\xi(l)_S\otimes\hat{\xi}(l)_E\right)
\right\}\ .
\end{align}
Both these states are pure, so using that 
$F(\proj{\psi} ,\sigma) := \sqrt{\bra{\phi}\sigma\ket{\phi}}$
and that $\ket{\Psi}_{EE'}=\frac{1}{\sqrt{d_E}}\sum_{r}\ket{r}_E\ket{r}_{E'}$ we find
\begin{align}\label{memorylemmaaux}
F^2&\left\{
\tau_S(t) ,
\tr_E \left[ 
\left(\sum_ke^{-i E_kt}\xi(k)_S\otimes\hat{\xi}(k)_E\right) 
\left(\phi_S\otimes\pi_E\right)
\left(\sum_le^{+i E_lt}\xi(l)_S\otimes\hat{\xi}(l)_E\right)
\right]
\right\}
\nn\\ &\geq
\left|
\bra{\phi}_S\bra{\Psi}_{EE'}\left(\sum_le^{+i E_lt}\proj{E_l}_{SE}\right)
\left(\sum_ke^{-i E_kt}\xi(k)_S\otimes\hat{\xi}(k)_E\right)\ket{\phi}_S\ket{\Psi}_{EE'}
\right|^2
\nn\\ &= \left|\frac{1}{d_E}\sum_r
\bra{\phi}_S\bra{r}_E\left(\sum_le^{+i E_lt}\proj{E_l}_{SE}\right)
\left(\sum_ke^{-i E_kt}\xi(k)_S\otimes\hat{\xi}(k)_E\right)\ket{\phi}_S\ket{r}_E
\right|^2
\nn\\ &= \left|\frac{1}{d_E}\sum_r
\bra{\phi}_S\bra{r}_E\left(\sum_le^{+i E_lt}\proj{E_l}_{SE}\right)
\left(\sum_ke^{-i E_kt}\ket{\xi(k)}_S\ket{\hat{\xi}(k)}_E\right)
\delta_{\phi,\xi(k)}\delta_{r,\hat{\xi}(k)}
\right|^2
\nn\\ &= \left|\frac{1}{d_E}\sum_{kl} \delta_{\phi,\xi(k)} e^{-i (E_k-E_l)t}
\bra{\xi(k)}_S\bra{\hat{\xi}(k)}_E\proj{E_l}_{SE}
\ket{\xi(k)}_S\ket{\hat{\xi}(k)}_E
\right|^2
\nn\\ &= \left|\frac{1}{d_E}\sum_{kl} \delta_{\phi,\xi(k)} e^{-i (E_k-E_l)t}
F^2\left\{\ket{E_l}_{SE} ,
\ket{\xi(k)}_S\ket{\hat{\xi}(k)}_E
\right\}\right|^2
\end{align}
By definition (\ref{deltadefi}) the requirement $\delta(\phi)>\frac{1}{\sqrt{2}}$ requires that 
\begin{align}
F^2\left\{\ket{E_k}_{SE} ,
\ket{\xi(k)}_S\ket{\hat{\xi}(k)}_E
\right\}
>\half
\end{align} if $\delta_{\phi,\xi(k)}=1$. Since $\sum_lF^2\left\{\ket{E_l}_{SE} ,
\ket{\xi(k)}_S\ket{\hat{\xi}(k)}_E\right\}=1$, this also implies that 
\cite{sum_comment}
\begin{align}
\sum_{l : l\neq k}F^2\left\{\ket{E_l}_{SE} ,
\ket{\xi(k)}_S\ket{\hat{\xi}(k)}_E
\right\}
<\half
\end{align}
if $\delta_{\phi,\xi(k)}=1$.
We conclude that
\begin{align}
&\left|\sum_{k\neq l} \delta_{\phi,\xi(k)}e^{-i (E_k-E_l)t}  
F^2\left\{\ket{E_l}_{SE} ,
\ket{\xi(k)}_S\ket{\hat{\xi}(k)}_E
\right\}\right|
\nn\\&\quad\leq
\sum_{k\neq l}\delta_{\phi,\xi(k)}F^2\left\{\ket{E_l}_{SE} ,
\ket{\xi(k)}_S\ket{\hat{\xi}(k)}_E
\right\}
\nn\\&\quad\leq
\sum_{k=l}\delta_{\phi,\xi(k)}F^2\left\{\ket{E_l}_{SE} ,
\ket{\xi(k)}_S\ket{\hat{\xi}(k)}_E
\right\}\ .
\end{align}
We split up the sum $\sum_{kl}=\sum_{k=l}+\sum_{k\neq l}$ and use that for $a,b\in\Complex$ with $\left|a\right|\geq\left|b\right|$ we have $\left|a+b\right|\geq\left|a\right|-\left|b\right|$ to obtain from (\ref{memorylemmaaux})
\begin{align}
F^2&\left\{
\tau_S(t) ,
\tr_E \left[ 
\left(\sum_ke^{-i E_kt}\xi(k)_S\otimes\hat{\xi}(k)_E\right) 
\left(\phi_S\otimes\pi_E\right)
\left(\sum_le^{+i E_lt}\xi(l)_S\otimes\hat{\xi}(l)_E\right)
\right]
\right\}
\nn\\ &\geq
\left(
\frac{1}{d_E}\left|\sum_{k=l} \delta_{\phi,\xi(k)}e^{-i (E_k-E_l)t}  
F^2\left\{\ket{E_l}_{SE} ,
\ket{\xi(k)}_S\ket{\hat{\xi}(k)}_E
\right\}\right| \right.
\nn\\ &\qquad\quad \left. -
\frac{1}{d_E}\left|\sum_{k\neq l} \delta_{\phi,\xi(k)}e^{-i (E_k-E_l)t}  
F^2\left\{\ket{E_l}_{SE} ,
\ket{\xi(k)}_S\ket{\hat{\xi}(k)}_E
\right\}\right|
\right)^2
\nn\\ &\geq
\left(
\frac{1}{d_E}\sum_{k=l} \delta_{\phi,\xi(k)}
F^2\left\{\ket{E_l}_{SE} ,
\ket{\xi(k)}_S\ket{\hat{\xi}(k)}_E
\right\} 
\right.
\nn\\ &\qquad\quad \left. 
-
\frac{1}{d_E}\sum_{k\neq l} \delta_{\phi,\xi(k)}  
F^2\left\{\ket{E_l}_{SE} ,
\ket{\xi(k)}_S\ket{\hat{\xi}(k)}_E
\right\}
\right)^2\ .
\end{align}
Using that
\begin{align}
\sum_{l : l\neq k}F^2\left\{\ket{E_l}_{SE},\ket{\xi(k)}_S\ket{\hat{\xi}(k)}_E\right\}=
1-F^2\left\{\ket{E_k}_{SE},\ket{\xi(k)}_S\ket{\hat{\xi}(k)}_E\right\}=1-f_k^2
\end{align}
this simplifies to
\begin{align}
F^2&\left\{
\tau_S(t) ,
\tr_E \left[ 
\left(\sum_ke^{-i E_kt}\xi(k)_S\otimes\hat{\xi}(k)_E\right) 
\left(\phi_S\otimes\pi_E\right)
\left(\sum_le^{+i E_lt}\xi(l)_S\otimes\hat{\xi}(l)_E\right)
\right]
\right\}
\nn\\ &\geq \left(
\frac{1}{d_E}\sum_k \delta_{\phi,\xi(k)} 
f_k^2 -
\frac{1}{d_E}\sum_k \delta_{\phi,\xi(k)}
\left(1-f_k^2\right)
\right)^2
\nn\\ &= \left(
\frac{1}{d_E}\sum_k \delta_{\phi,\xi(k)}
\left(2f_k^2-1\right)
\right)^2
\end{align}
Applying the definition of $\delta(\phi)$ and the bijectivity of the mapping we finally obtain
\begin{align}
F^2&\left\{
\tau_S(t) ,
\tr_E \left[ 
\left(\sum_ke^{-i E_kt}\xi(k)_S\otimes\hat{\xi}(k)_E\right) 
\left(\phi_S\otimes\pi_E\right)
\left(\sum_le^{+i E_lt}\xi(l)_S\otimes\hat{\xi}(l)_E\right)
\right]
\right\}
\nn\\ &\geq
\left(\frac{1}{d_E}\left(2\delta(\phi)^2-1\right)\sum_k \delta_{\phi,\xi(k)}\right)^2 
\nn\\ &=
\left(2\delta(\phi)^2-1\right)^2\ . 
\end{align}
As for the second part of the proof,
\begin{align}\label{memorypf2}
\tr_E &\left[ 
\left(\sum_ke^{-i E_kt}\xi(k)_S\otimes\hat{\xi}(k)_E\right) 
\left(\phi_S\otimes\pi_E\right)
\left(\sum_le^{+i E_lt}\xi(l)_S\otimes\hat{\xi}(l)_E\right)
\right]
\nn\\ &=
\sum_{kl} \frac{1}{d_E} e^{-i (E_k-E_l)t} 
\delta_{\phi,\xi(k)}\delta_{\phi,\xi(l)}\delta_{\hat{\xi}(k),\hat{\xi}(l)}
\ketbra{\xi(k)}{\xi(l)}_S
\nn\\ &=
\sum_{kl} \frac{1}{d_E} e^{-i (E_k-E_l)t} 
\delta_{\phi,\xi(k)}\delta_{\xi(k),\xi(l)}\delta_{\hat{\xi}(k),\hat{\xi}(l)}\phi_S 
\end{align}
Applying (\ref{kroneckers}) for the first equality and the bijectivity for the second this simplifies to
\begin{align}
\tr_E &\left[ 
\left(\sum_ke^{-i E_kt}\xi(k)_S\otimes\hat{\xi}(k)_E\right) 
\left(\phi_S\otimes\pi_E\right)
\left(\sum_le^{+i E_lt}\xi(l)_S\otimes\hat{\xi}(l)_E\right)
\right]
\nn\\ &=
\sum_{k} \frac{1}{d_E} 
\delta_{\phi,\xi(k)}\phi_S 
\nn\\ &=
\phi_S\ .
\end{align}
We find
\begin{align}\
F\left\{
\tau_S(t) ,
\phi_S
\right\}
\geq
2\delta(\phi)^2-1 
\end{align}
and by use of the Fuchs-van de Graaf inequalities \cite{fvdg_comment,fuchs:thesis}
\begin{align}
\left\|\tau_S(t)-\phi_S\right\|_1
&\leq
2\sqrt{1-F(\tau_S(t),\phi_S)^2}
\nn\\&\leq
2\sqrt{1-\left(2\delta(\phi)^2-1\right)^2}
\nn\\&=
4\delta(\phi)\sqrt{1-\delta(\phi)^2}
\end{align}
which is lower than $2$ if $\delta(\phi)$ is larger than $\frac{1}{\sqrt{2}}$.
\end{proof}


\begin{thebibliography}{37}
\expandafter\ifx\csname natexlab\endcsname\relax\def\natexlab#1{#1}\fi
\expandafter\ifx\csname bibnamefont\endcsname\relax
  \def\bibnamefont#1{#1}\fi
\expandafter\ifx\csname bibfnamefont\endcsname\relax
  \def\bibfnamefont#1{#1}\fi
\expandafter\ifx\csname citenamefont\endcsname\relax
  \def\citenamefont#1{#1}\fi
\expandafter\ifx\csname url\endcsname\relax
  \def\url#1{\texttt{#1}}\fi
\expandafter\ifx\csname urlprefix\endcsname\relax\def\urlprefix{URL }\fi
\providecommand{\bibinfo}[2]{#2}
\providecommand{\eprint}[2][]{\url{#2}}

\bibitem[{\citenamefont{Linden et~al.}(2009)\citenamefont{Linden, Popescu,
  Short, and Winter}}]{linden:evolution}
\bibinfo{author}{\bibfnamefont{N.}~\bibnamefont{Linden}},
  \bibinfo{author}{\bibfnamefont{S.}~\bibnamefont{Popescu}},
  \bibinfo{author}{\bibfnamefont{A.~J.} \bibnamefont{Short}}, \bibnamefont{and}
  \bibinfo{author}{\bibfnamefont{A.}~\bibnamefont{Winter}},
  \bibinfo{journal}{Phys. Rev. E} \textbf{\bibinfo{volume}{79}},
  \bibinfo{pages}{061103} (\bibinfo{year}{2009}).

\bibitem[{\citenamefont{Linden et~al.}(2010)\citenamefont{Linden, Popescu,
  Short, and Winter}}]{linden:fluctuations}
\bibinfo{author}{\bibfnamefont{N.}~\bibnamefont{Linden}},
  \bibinfo{author}{\bibfnamefont{S.}~\bibnamefont{Popescu}},
  \bibinfo{author}{\bibfnamefont{A.~J.} \bibnamefont{Short}}, \bibnamefont{and}
  \bibinfo{author}{\bibfnamefont{A.}~\bibnamefont{Winter}},
  \bibinfo{journal}{New Journal of Physics} \textbf{\bibinfo{volume}{12}},
  \bibinfo{pages}{055021} (\bibinfo{year}{2010}).

\bibitem[{\citenamefont{Gogolin}(2010)}]{gogolin:mthesis}
\bibinfo{author}{\bibfnamefont{C.}~\bibnamefont{Gogolin}}
  (\bibinfo{year}{2010}), \bibinfo{note}{arXiv:1003.5058}.

\bibitem[{\citenamefont{Gogolin et~al.}(2011)\citenamefont{Gogolin, M\"uller, and
  Eisert}}]{gogolin:absence}
\bibinfo{author}{\bibfnamefont{C.}~\bibnamefont{Gogolin}},
  \bibinfo{author}{\bibfnamefont{M.~P.} \bibnamefont{M\"uller}},
  \bibnamefont{and} \bibinfo{author}{\bibfnamefont{J.}~\bibnamefont{Eisert}},
  \bibinfo{journal}{Phys. Rev. Lett.} \textbf{\bibinfo{volume}{106}},
  \bibinfo{pages}{040401} (\bibinfo{year}{2011}).

\bibitem[{\citenamefont{Riera et~al.}(2012)\citenamefont{Riera, Gogolin, and
  Eisert}}]{gogolin:thermalization}
\bibinfo{author}{\bibfnamefont{A.}~\bibnamefont{Riera}},
  \bibinfo{author}{\bibfnamefont{C.} \bibnamefont{Gogolin}},
  \bibnamefont{and} \bibinfo{author}{\bibfnamefont{J.}~\bibnamefont{Eisert}},
  \bibinfo{journal}{Phys. Rev. Lett.} \textbf{\bibinfo{volume}{108}},
  \bibinfo{pages}{080402} (\bibinfo{year}{2012}).

\bibitem[{\citenamefont{Tomamichel et~al.}(2009)\citenamefont{Tomamichel,
  Colbeck, and Renner}}]{marco:fqaep}
\bibinfo{author}{\bibfnamefont{M.}~\bibnamefont{Tomamichel}},
  \bibinfo{author}{\bibfnamefont{R.}~\bibnamefont{Colbeck}}, \bibnamefont{and}
  \bibinfo{author}{\bibfnamefont{R.}~\bibnamefont{Renner}},
  \bibinfo{journal}{IEEE Trans. Inf Theory} \textbf{\bibinfo{volume}{55}},
  \bibinfo{pages}{5840} (\bibinfo{year}{2009}).

\bibitem[{\citenamefont{K\"onig et~al.}(2009)\citenamefont{K\"onig, Renner, and
  Schaffner}}]{renato:operational}
\bibinfo{author}{\bibfnamefont{R.}~\bibnamefont{K\"onig}},
  \bibinfo{author}{\bibfnamefont{R.}~\bibnamefont{Renner}}, \bibnamefont{and}
  \bibinfo{author}{\bibfnamefont{C.}~\bibnamefont{Schaffner}},
  \bibinfo{journal}{{IEEE} Trans. on Information Theory}
  \textbf{\bibinfo{volume}{55}} (\bibinfo{year}{2009}).

\bibitem[{\citenamefont{del Rio et~al.}(2011{\natexlab{a}})\citenamefont{del
  Rio, Aberg, Renner, Dahlsten, and Vedral}}]{lidia:thermodynamic}
\bibinfo{author}{\bibfnamefont{L.}~\bibnamefont{del Rio}},
  \bibinfo{author}{\bibfnamefont{J.}~\bibnamefont{Aberg}},
  \bibinfo{author}{\bibfnamefont{R.}~\bibnamefont{Renner}},
  \bibinfo{author}{\bibfnamefont{O.}~\bibnamefont{Dahlsten}}, \bibnamefont{and}
  \bibinfo{author}{\bibfnamefont{V.}~\bibnamefont{Vedral}},
  \bibinfo{journal}{Nature} \textbf{\bibinfo{volume}{474}}, \bibinfo{pages}{61}
  (\bibinfo{year}{2011}{\natexlab{a}}).

\bibitem[{\citenamefont{del Rio et~al.}(2012{\natexlab{b}})\citenamefont{del
  Rio, Hutter, Renner, and Wehner}}]{lidia:decouple}
\bibinfo{author}{\bibfnamefont{L.}~\bibnamefont{del Rio}},
  \bibinfo{author}{\bibfnamefont{A.}~\bibnamefont{Hutter}},
  \bibinfo{author}{\bibfnamefont{R.}~\bibnamefont{Renner}}, \bibnamefont{and}
  \bibinfo{author}{\bibfnamefont{S.}~\bibnamefont{Wehner}}
  (\bibinfo{year}{2012}{\natexlab{b}}), \bibinfo{note}{in preparation.}

\bibitem[{\citenamefont{Tomamichel et~al.}(2010)\citenamefont{Tomamichel,
  Colbeck, and Renner}}]{tomamichel:duality}
\bibinfo{author}{\bibfnamefont{M.}~\bibnamefont{Tomamichel}},
  \bibinfo{author}{\bibfnamefont{R.}~\bibnamefont{Colbeck}}, \bibnamefont{and}
  \bibinfo{author}{\bibfnamefont{R.}~\bibnamefont{Renner}},
  \bibinfo{journal}{IEEE Transactions on Information Theory}
  \textbf{\bibinfo{volume}{56}}, \bibinfo{pages}{4674} (\bibinfo{year}{2010}).

\bibitem{definition_comment}
Note that some of the references like \cite{renato:diss,marco:fqaep,dupuis:diss} use older definitions for the smoothing procedure and relations found therein may therefore not be valid with the definition used in this article, which is based on \cite{tomamichel:duality}.

\bibitem{helstrom_comment}
As the trace distance quantifies how well we can distinguish $\rho$ from $\sigma$ when given with equal probability~\cite{helstrom:detection},
this says that there exists no physical process that can easily
tell them apart.

\bibitem[{\citenamefont{Helstrom}(1967)}]{helstrom:detection}
\bibinfo{author}{\bibfnamefont{C.~W.} \bibnamefont{Helstrom}},
  \bibinfo{journal}{Information and Control} \textbf{\bibinfo{volume}{10}},
  \bibinfo{pages}{254} (\bibinfo{year}{1967}).

\bibitem[{\citenamefont{Renner}(2005)}]{renato:diss}
\bibinfo{author}{\bibfnamefont{R.}~\bibnamefont{Renner}}, Ph.D. thesis,
  \bibinfo{school}{ETH Zurich} (\bibinfo{year}{2005}),
  \bibinfo{note}{quant-ph/0512258}.

\bibitem[{\citenamefont{Vitanov et~al.}(2012)\citenamefont{Vitanov, Dupuis, Tomamichel, and Renner}}]{vitanov:chainrules}
\bibinfo{author}{\bibfnamefont{A.}~\bibnamefont{Vitanov}},
  \bibinfo{author}{\bibfnamefont{F.}~\bibnamefont{Dupuis}},
  \bibinfo{author}{\bibfnamefont{M.}~\bibnamefont{Tomamichel}},
  \bibnamefont{and} \bibinfo{author}{\bibfnamefont{R.}~\bibnamefont{Renner}}
  (\bibinfo{year}{2012}), \bibinfo{note}{arXiv:1205.5231v1}.

\bibitem[{\citenamefont{Dupuis}(2009)}]{dupuis:diss}
\bibinfo{author}{\bibfnamefont{F.}~\bibnamefont{Dupuis}}, Ph.D. thesis,
  \bibinfo{school}{Universit\'e de Montr\'eal} (\bibinfo{year}{2009}),
  \bibinfo{note}{arXiv:1004.1641}.

\bibitem[{\citenamefont{Dupuis et~al.}(2010)\citenamefont{Dupuis, Berta,
  Wullschleger, and Renner}}]{dupuis:decoupling}
\bibinfo{author}{\bibfnamefont{F.}~\bibnamefont{Dupuis}},
  \bibinfo{author}{\bibfnamefont{M.}~\bibnamefont{Berta}},
  \bibinfo{author}{\bibfnamefont{J.}~\bibnamefont{Wullschleger}},
  \bibnamefont{and} \bibinfo{author}{\bibfnamefont{R.}~\bibnamefont{Renner}}
  (\bibinfo{year}{2010}), \bibinfo{note}{arXiv:1012.6044v1}.

\bibitem{finite_comment}
In principal, the dimension of $E$ may be infinite. 
However, here we consider Hilbert spaces of very large but finite dimension. We believe that this restriction is not of any fundamental significance. 
Technically, it may be achieved by imposing a high-energy cutoff on an infinite-dimensional $\henv$.

\bibitem{mixed_states_comment}
Our results readily generalize to mixed initial states and even states which are correlated/entangled to some outside reference $R$. Furthermore, our results generalize to weakly coupled initial states, i.e.\ initial states that are close to a product state. If two initial state are globally close, then evolving them unitarily and tracing out \env can never increase their distance, so that all of our predictions hold with a small additional error also for states that initially are close to product.

\bibitem{recurrence_comment}
See K. Bhattacharyya and D. Mukherjee, J. Chem. Phys. {\bf 84}, 3212 (1986), and references therein. 

\bibitem[{\citenamefont{Eisert and Osborne}(2006)}]{jens:entanglement}
\bibinfo{author}{\bibfnamefont{J.}~\bibnamefont{Eisert}} \bibnamefont{and}
  \bibinfo{author}{\bibfnamefont{T.~J.} \bibnamefont{Osborne}},
  \bibinfo{journal}{Phys. Rev. Lett.} \textbf{\bibinfo{volume}{97}},
  \bibinfo{pages}{150404} (\bibinfo{year}{2006}).

\bibitem[{\citenamefont{Bravyi et~al.}(2006)\citenamefont{Bravyi, Hastings, and Verstraete}}]{sergey:LR}
\bibinfo{author}{\bibfnamefont{S.}~\bibnamefont{Bravyi}},
  \bibinfo{author}{\bibfnamefont{M.~B.} \bibnamefont{Hastings}},
  \bibnamefont{and} \bibinfo{author}{\bibfnamefont{F.}~\bibnamefont{Verstraete}},
  \bibinfo{journal}{Phys. Rev. Lett.} \textbf{\bibinfo{volume}{97}},
  \bibinfo{pages}{050401} (\bibinfo{year}{2006})

\bibitem{experiment_comment}
An effective light-cone as described by the Lieb-Robinson bounds has
recently been measured for the first time in an experiment \cite{cheneau:lightcone}. The
time-scales we have found are thus in principal open to experimental verification.
Furthermore, using our result the found dependencies of $v_{\text{LR}}$ on microscopic
physical parameters directly translate into dependencies of thermalization
time-scales on those same parameters.

\bibitem[{\citenamefont{Cheneau et~al.}(2012)\citenamefont{Cheneau, Barmettler, Poletti, Endres, Schausz, Fukuhara, Gross, Bloch, Kollath, and Kuhr}}]{cheneau:lightcone}
  \bibinfo{author}{\bibfnamefont{M.}~\bibnamefont{Cheneau}},
  \bibinfo{author}{\bibfnamefont{P.}~\bibnamefont{Barmettler}},
  \bibinfo{author}{\bibfnamefont{D.}~\bibnamefont{Poletti}},
  \bibinfo{author}{\bibfnamefont{M.}~\bibnamefont{Endres}},
  \bibinfo{author}{\bibfnamefont{P.}~\bibnamefont{Schausz}},
  \bibinfo{author}{\bibfnamefont{T.}~\bibnamefont{Fukuhara}},
  \bibinfo{author}{\bibfnamefont{C.}~\bibnamefont{Gross}},
  \bibinfo{author}{\bibfnamefont{I.}~\bibnamefont{Bloch}},   
  \bibinfo{author}{\bibfnamefont{C.}~\bibnamefont{Kollath}}, 
  \bibnamefont{and}
  \bibinfo{author}{\bibfnamefont{S.}~\bibnamefont{Kuhr}},
  \bibinfo{journal}{Nature} \textbf{\bibinfo{volume}{481}},
  \bibinfo{pages}{484} (\bibinfo{year}{2012}).

\bibitem[{\citenamefont{Short and Farrelly}(2011)}]{tony:finiteTime}
\bibinfo{author}{\bibfnamefont{A.~J.} \bibnamefont{Short}} \bibnamefont{and}
  \bibinfo{author}{\bibfnamefont{T.~C.} \bibnamefont{Farrelly}}
  (\bibinfo{year}{2011}), \bibinfo{note}{arXiv:1110.5759v1}.

\bibitem[{\citenamefont{Masanes et~al.}(2011)\citenamefont{Masanes, Roncaglia,
  and Acin}}]{randomHam1}
\bibinfo{author}{\bibfnamefont{L.}~\bibnamefont{Masanes}},
  \bibinfo{author}{\bibfnamefont{A.}~\bibnamefont{Roncaglia}}, \bibnamefont{and}
  \bibinfo{author}{\bibfnamefont{A.}~\bibnamefont{Acin}}
  (\bibinfo{year}{2011}), \bibinfo{note}{arXiv:1108.0374v2}.

\bibitem[{\citenamefont{Masanes et~al.}(2011)\citenamefont{Brand\~ao, \'Cwikli\'nski, 
  Horodecki, Horodecki, Korbicz, and Mozrzymas}}]{randomHam2}
\bibinfo{author}{\bibfnamefont{F.}~\bibnamefont{Brand\~ao}},
  \bibinfo{author}{\bibfnamefont{P.}~\bibnamefont{\'Cwikli\'nski}},
  \bibinfo{author}{\bibfnamefont{P.}~\bibnamefont{Horodecki}},
  \bibinfo{author}{\bibfnamefont{J.}~\bibnamefont{Horodecki}},
  \bibinfo{author}{\bibfnamefont{J.}~\bibnamefont{Korbicz}}, \bibnamefont{and}
  \bibinfo{author}{\bibfnamefont{M.}~\bibnamefont{Mozrzymas}}
  (\bibinfo{year}{2011}), \bibinfo{note}{arXiv:1108.2985v4}.

\bibitem[{\citenamefont{Vinayak and \u Znidari\u c}(2012)}]{randomHam3}
\bibinfo{author}{\bibnamefont{Vinayak}} \bibnamefont{and}
  \bibinfo{author}{\bibfnamefont{M.}~\bibnamefont{\u Znidari\u c}},
  \bibinfo{journal}{J. Phys. A: Math. Theor.} \textbf{\bibinfo{volume}{45}},
  \bibinfo{pages}{125204} (\bibinfo{year}{2012}).

\bibitem{dilation_comment}
While the Stinespring dilation is not unique, all dilations are connected by isometries on the $E$-system and the entropies are invariant under isometries. The value found for $p_c$ is therefore independent of what particular dilation is chosen.

\bibitem[{\citenamefont{Bennett et~al.}(1996)\citenamefont{Bennett, DiVincenzo, Smolin, and Wootters}}]{bennett:entanglement}
\bibinfo{author}{\bibfnamefont{Charles H.}~\bibnamefont{Bennett}},
  \bibinfo{author}{\bibfnamefont{David} \bibnamefont{DiVincenzo}},  
  \bibinfo{author}{\bibfnamefont{John A.} \bibnamefont{Smolin}},
  \bibnamefont{and} \bibinfo{author}{\bibfnamefont{William K.}~\bibnamefont{Wootters}},
  \bibinfo{journal}{Phys. Rev. A} \textbf{\bibinfo{volume}{54}},
  \bibinfo{pages}{3824--3851} (\bibinfo{year}{1996}).

\bibitem[{\citenamefont{Bombin et~al.}(2012)\citenamefont{Bombin, Andrist, Ohzeki, Katzgraber, and Martin-Delgado}}]{bombin:resilience}
\bibinfo{author}{\bibfnamefont{H.}~\bibnamefont{Bombin}},
  \bibinfo{author}{\bibfnamefont{Ruben S.} \bibnamefont{Andrist}},  
  \bibinfo{author}{\bibfnamefont{Masayuki} \bibnamefont{Ohzeki}},
  \bibinfo{author}{\bibfnamefont{Helmut G.} \bibnamefont{Katzgraber}},
  \bibnamefont{and} \bibinfo{author}{\bibfnamefont{M.~A.}~\bibnamefont{Martin-Delgado}},
  \bibinfo{journal}{Phys. Rev. X} \textbf{\bibinfo{volume}{2}},
  \bibinfo{pages}{021004} (\bibinfo{year}{2012}).

\bibitem[{\citenamefont{Hayden et~al.}(2008)\citenamefont{Hayden, Horodecki,
  Yard, and Winter}}]{patrick:decouple}
\bibinfo{author}{\bibfnamefont{P.}~\bibnamefont{Hayden}},
  \bibinfo{author}{\bibfnamefont{M.}~\bibnamefont{Horodecki}},
  \bibinfo{author}{\bibfnamefont{J.}~\bibnamefont{Yard}}, \bibnamefont{and}
  \bibinfo{author}{\bibfnamefont{A.}~\bibnamefont{Winter}},
  \bibinfo{journal}{Open Systems and Information Dynamics}
  \textbf{\bibinfo{volume}{15}}, \bibinfo{pages}{7} (\bibinfo{year}{2008}).

\bibitem{channel_comment}
Of course, making statements for almost all states does tell us that it cannot be too useful.

\bibitem[{\citenamefont{Harrow and Low}(2009)}]{harrow:2design}
\bibinfo{author}{\bibfnamefont{A.}~\bibnamefont{Harrow}} \bibnamefont{and}
  \bibinfo{author}{\bibfnamefont{R.}~\bibnamefont{Low}},
  \bibinfo{journal}{Communications in Mathematical Physics}
  \textbf{\bibinfo{volume}{291}}, \bibinfo{pages}{257} (\bibinfo{year}{2009}).

\bibitem[{\citenamefont{Hayden and Preskill}(2007)}]{patrick:blackHole}
\bibinfo{author}{\bibfnamefont{P.}~\bibnamefont{Hayden}} \bibnamefont{and}
  \bibinfo{author}{\bibfnamefont{J.}~\bibnamefont{Preskill}},
  \bibinfo{journal}{Journal of High Energy Physics}
  \textbf{\bibinfo{volume}{09}}, \bibinfo{pages}{120} (\bibinfo{year}{2007}).

\bibitem[{\citenamefont{Szehr et~al.}(2011)\citenamefont{Szehr, Dupuis,
  Tomamichel, and Renner}}]{oleg:decoupling}
\bibinfo{author}{\bibfnamefont{O.}~\bibnamefont{Szehr}},
  \bibinfo{author}{\bibfnamefont{F.}~\bibnamefont{Dupuis}},
  \bibinfo{author}{\bibfnamefont{M.}~\bibnamefont{Tomamichel}},
  \bibnamefont{and} \bibinfo{author}{\bibfnamefont{R.}~\bibnamefont{Renner}}
  (\bibinfo{year}{2011}), \bibinfo{note}{arXiv:1109.4348}.

\bibitem{hmax_comment}
Note that the quantity called $\hmax$ in \cite{renato:diss} differs from ours. $\hmin(A|B)_\rho \geq \hmin (AB)_\rho - \hmax (B)_\rho$ does not hold.

\bibitem[{\citenamefont{Anderson et~al.}(2009)\citenamefont{Anderson, Guionnet,
  and Zeitouni}}]{anderson:matrices}
\bibinfo{author}{\bibfnamefont{G.~W.} \bibnamefont{Anderson}},
  \bibinfo{author}{\bibfnamefont{A.}~\bibnamefont{Guionnet}}, \bibnamefont{and}
  \bibinfo{author}{\bibfnamefont{O.}~\bibnamefont{Zeitouni}},
  \emph{\bibinfo{title}{An Introduction to Random Matrices}}, vol.
  \bibinfo{volume}{Basic Tools} (\bibinfo{year}{2009}).

\bibitem{fvdg_comment}
The Fuchs-van de Graaf inequalities read \cite{fuchs:thesis}
\begin{align}
1-F(\rho,\sigma) \leq \half\trdist{\rho}{\sigma}\leq\sqrt{1-F(\rho,\sigma)^2}\ .
\end{align}

\bibitem[{\citenamefont{Fuchs}(1995)}]{fuchs:thesis}
\bibinfo{author}{\bibfnamefont{C.~A.} \bibnamefont{Fuchs}}, Ph.D. thesis,
  \bibinfo{school}{University of New Mexico, Albuquerque}
  (\bibinfo{year}{1995}), \bibinfo{note}{quant-ph/9601020}.

\bibitem{sum_comment}
With $\sum_{l : l\neq k}$ we denote a sum over all values of $l$ which are not equal to $k$. With $\sum_{k\neq l}$ we denote a sum over all pairs of possible values of $k$ and $l$ which are not equal.

\bibitem[{\citenamefont{Berta et~al.}(2011)\citenamefont{Berta, Christandl, and
  Renner}}]{mario:reverse}
\bibinfo{author}{\bibfnamefont{M.}~\bibnamefont{Berta}},
  \bibinfo{author}{\bibfnamefont{M.}~\bibnamefont{Christandl}},
  \bibnamefont{and} \bibinfo{author}{\bibfnamefont{R.}~\bibnamefont{Renner}},
  \bibinfo{journal}{Communications in Mathematical Physics}
  \textbf{\bibinfo{volume}{306}}, \bibinfo{pages}{579} (\bibinfo{year}{2011}).

\end{thebibliography}
\end{document}